\newcommand{\twospace}{\renewcommand{\baselinestretch}{1.3}\normalsize}
\newcommand{\Acnm}{\mathcal{A}^{\,d_1,e_1,d_2,e_2}_{n,m}}
\DeclareMathOperator*{\argmin}{arg\,min}
\DeclareMathOperator*{\argmax}{arg\,max}
\newcommand{\vnase}{\text{VN}\circ\text{GMM}\circ\text{ASE}}
\newcommand{\nnmf}{\mathcal{N}^{(n,m)}_{(d_1,e_1),(d_2,e_2)}}
\newcommand{\nnm}{\mathcal{N}^{(n,m)}}
\newcommand{\gnv}{\mathcal{G}_{n,\cv,\ce}}
\newcommand{\gmv}{\mathcal{G}_{m,\cv,\ce}}
\newcommand{\VNA}{\text{VN}\circ\text{GMM}\circ\text{ASE}}
\newcommand{\gn}{\mathcal{G}_n}
\newcommand{\gm}{\mathcal{G}_m}
\newcommand{\bbg}{ { \bf g}}
\newcommand{\gog}{({\bf g_1}, [\mo(\bbg_2)])}
\newcommand{\gmf}{\mathcal{G}_{m,\cv,\ce}^{\,d_2,e_2}}
\newcommand{\gnf}{\mathcal{G}_{n,\cv,\ce}^{\,d_1,e_1}}
\newcommand{\bH}{\mathbb{H}}
\newcommand{\calT}{{\mathcal T}}
\newcommand{\Z}{\mathbb{Z}}
\newcommand{\R}{\mathbb{R}}
\newcommand{\p}{\mathbb{P}}
\newcommand{\mre}{\widetilde{\mathcal{E}}}
\newcommand{\ce}{\mathcal{E}}
\newcommand{\cv}{\mathcal{V}}
\newcommand{\FF}{F^{(n,m)}_{c,\theta,(d_1,e_1),(d_2,e_2)}}
\newcommand{\FFt}{F^{(n,m)}_{\Theta}}
\newcommand{\e}{\mathbb{E}}
\newcommand{\mO}{\mathfrak{O}}
\newcommand{\rank}{\operatorname{rank}}
\newcommand{\id}{\operatorname{id}}
\newcommand{\rF}{{r_{\operatorname{F}}}}
\newcommand{\rGF}{{r_{\operatorname{GF}}}}
\newcommand{\rG}{{r_{\operatorname{G}}}}
\newcommand{\bW}{{\bf W}}
\newcommand{\bZ}{{\bf Z}}
\newcommand{\bX}{{\bf X}}
\newcommand{\bY}{{\bf Y}}
\newcommand{\bw}{{\bf w}}
\newcommand{\bz}{{\bf z}}
\newcommand{\Rb}{\mathbb{R}}
\newcommand{\bx}{{\bf x}}
\newcommand{\by}{{\bf y}}
\newcommand{\AS}{\mathfrak{A}_{n,m,\cv,\ce}^{(d_1,e_1,d_2,e_2)}}
\newcommand{\mo}{\mathfrak{o}}
\newtheorem{theorem}{Theorem}
\newtheorem{assu}[theorem]{Assumption}
\newtheorem{lemma}[theorem]{Lemma}
\theoremstyle{plain}
\newtheorem{definition}[theorem]{Definition}
\theoremstyle{plain}
\newtheorem{example}[theorem]{Example}
\theoremstyle{remark}
\newtheorem{remark}[theorem]{Remark}
\begin{document}

\title{Vertex Nomination in Richly Attributed Networks}
\author[$1$]{Keith Levin}
\author[$2,3$]{Carey E. Priebe}
\author[$4$]{Vince~Lyzinski}

\affil[$1$]{\small Department of Statistics, University of Wisconsin-Madison}
\affil[$2$]{\small Department of Applied Mathematics and Statistics, Johns Hopkins University}
\affil[$3$]{\small Center for Imaging Sciences, Johns Hopkins University}
\affil[$4$]{\small Department of Mathematics, University of Maryland-College Park}

\maketitle

\begin{abstract}
Vertex nomination is a lightly-supervised network information retrieval task in which vertices of interest in one graph are used to query a second graph to discover vertices of interest in the second graph. Similar to other information retrieval tasks, the output of a vertex nomination scheme is a ranked list of the vertices in the second graph, with the heretofore unknown vertices of interest ideally concentrating at the top of the list.  Vertex nomination schemes provide a useful suite of tools for efficiently mining complex networks for pertinent information.
In this paper, we explore, both theoretically and practically, the dual roles of content (i.e., edge and vertex attributes) and context (i.e., network topology) in vertex nomination. We provide necessary and sufficient conditions under which vertex nomination schemes that leverage both content and context outperform schemes that leverage only content or context separately.  
While the joint utility of both content and context has been demonstrated empirically in the literature, the framework presented in this paper provides a novel theoretical basis for understanding the potential complementary roles of network features and topology.
\end{abstract}


\section{Introduction}
\label{sec:intro}

Network data has become ubiquitous in the sciences,
owing to the generality and flexibility
of networks in modeling relations among entities.
Networks appear in such varied fields as
neuroscience, genomics, the social sciences, economics and ecology,
to name just a few (see, for example, \cite{Newman2018}).
As such, statistical analysis of network data has
emerged as an important field within modern statistics \cite{kolaczyk2010statistical,kolaczyk2014statistical,crane2018probabilistic}.
Many classical statistical inference tasks,
such as 
hypothesis testing \cite{MT2,tang2014nonparametric,lei2014goodness,ginestet2017hypothesis}, 
regression \cite{fosdick2015testing,marrs2017standard},
and maximum likelihood estimation \cite{Bickel2009,snijders2010maximum,arroyo2018maximum} 
have been adapted to network data.
Inference tasks that are specific to network data,
such as
link-prediction \cite{liben2007link}, 
community detection \cite{NewCla2016,rohe2011spectral,SusTanFisPri2012}, and
vertex nomination \cite{marchette2011vertex,Coppersmith2014,suwan2015bayesian,FisLyzPaoChePri2015}
have also seen increasing popularity in recent years.
Among these network-specific tasks is the
{\em vertex nomination} (VN) problem,
in which the goal is to identify vertices similar to one or more
vertices specified as being of interest to a practitioner.
The VN task is
similar in spirit to popular network-based information retrieval (IR) procedures such as \texttt{PageRank} \cite{page1999pagerank} and 
personalized recommender systems on graphs \cite{huang2004graph}.
In VN, the goal is as follows:
Given vertices of interest in a graph $G_1$, produce a ranked list of the vertices in a second graph $G_2$ according to how likely they are judged to be interesting.
Ideally, interesting vertices in $G_2$ should concentrate
at the top of the ranked list.
As an inference task, this formulation of VN is distinguished from other supervised network IR tasks by the generality of what may define vertices as interesting and the limited available training data in $G_1$.
In contrast to typical IR problems, there is little or no training data available in the VN problem.

The vertex nomination problem was first introduced as a
task involving only a single graph,
and vertices of interest were modeled as belonging to a single
community of vertices
\cite{Coppersmith2014,FisLyzPaoChePri2015,lyzinski2016consistency,yoder2018vertex}.
The information provided by vertices with known community memberships,
called {\em seed vertices},
was leveraged to rank vertices with unknown membership,
with both network-topology and available vertex features being leveraged to produce ranking schemes \cite{marchette2011vertex,CopPri2012,suwan2015bayesian}.
This single-graph, community-based definition of the problem
is somewhat limited in its ability to capture network models
beyond the stochastic blockmodel \cite{Holland1983}.
Subsequent work lifted the problem to the two-network setting considered here \cite{patsolic2017vertex},
allowing a generalization of what defines interesting vertices and a generalization of the network models that could be considered \cite{patsolic2017vertex,lyzinski2017consistent,agterberg2019vertex}.

In many settings, observed networks are endowed with features at the vertex and/or edge level.
For example, in social networks, vertices typically correspond to users for whom we have demographic information, and edges correspond to different types of social relations.
The theoretical advances in both the single- and multiple-graph VN problem recounted above were established in the context of networks where no such feature are available.
It is natural, then, to seek to better understand the effect of
network attributes on the theoretical VN framework developed in \cite{lyzinski2017consistent} and \cite{agterberg2019vertex}.
Motivated by this, in the present work we develop VN on {\em richly-featured} networks, and we explore how the incorporation of this information impacts the concepts of Bayes optimality and consistency for the VN problem.
Furthermore, in Sections~\ref{sec:nofeat} and~\ref{sec:nogs},\ adopting an information theoretic perspective, we give the first steps toward a theoretical understanding (which is born out in subsequent experiments) of the potential benefit of VN schemata that use both content and context versus one of content or context alone.

The remainder of the paper is laid out as follows. 
In Section~\ref{sec:vnfeat}, we outline the extension of the vertex nomination framework to the attributed network setting, defining richly-featured graphs in Section~\ref{sec:RFN}, VN schema in Section~\ref{sec:VNschemes}, and VN performance measures in Section~\ref{sec:loss}.
In Section~\ref{sec:BOO}, we derive the Bayes optimal VN scheme in the setting of richly-featured networks, and in Sections~\ref{sec:nofeat} and~\ref{sec:nogs} we compare VN performance in the richly-featured setting to that in the
settings where feature information or network information,
respectively, is missing.
Experiments further illustrating the practical implications of our theoretical results are presented in Section \ref{sec:exp}.


\section{Vertex Nomination with Features}
\label{sec:vnfeat}
In the initial works introducing vertex nomination,
where the defining trait of interesting vertices was membership
in a community of interest, graph models with latent community structure (e.g., the stochastic blockmodel of \cite{Holland1983,karrer11:_stoch})
were sensible models for the underlying network structure.
The need for a more general notion of what renders a vertex interesting necessitated more nuanced models, culminating in the \emph{Nominatable Distribution} network model introduced in \cite{lyzinski2017consistent}.
We take this model as our starting point, and extend it by endowing it with both edge and vertex features.


\subsection{Richly Featured Networks}
\label{sec:RFN}

We begin by defining the class of networks with vertex and edge features,
which we call {\em richly-featured networks}.
We note here that there is a large literature on inference within attributed networks, with graphs endowed with features arising in settings such as social network analysis \cite{yl1,yl2} and knowledge representation \cite{kg1,kg2}, among others.

\begin{definition}\label{def:richfeat}
Let $\mathcal{V}$ and $\mathcal{E}$ be discrete sets of possible vertex and edge features, respectively.
A {\em richly-featured network} ${\bf g}$ indexed by $(n,d_1,d_2,\mathcal{V},\mathcal{E})$ is an ordered tuple 
${\bf g}=(g,\bx,\bw)$
where
\begin{itemize}
\item[i.] $g=(V,E)\in \mathcal{G}_n$ is a labeled, undirected graph on $n$ vertices.  The vertices of $g$ will be denoted via either $V=\{v_1,v_2,\cdots,v_n\}$ or $V=\{u_1,u_2,\cdots,u_n\}$.
\item[ii.] $\bx\in\mathcal{V}^{n\times d_1}$ denotes the matrix of $d_1$-dimensional vertex features, so that $\bx[v,:]$ is the vector of features associated with vertex $v$.
\item[iii.] Let $\mre=\mathcal{E}\cup\{\star\}$, where we use $\star$ as a special symbol representing unavailable data.
Letting $N=\binom{n}{2}$, $\bw\in\mre^{N\times d_2}$ denotes the matrix of $d_2$-dimensional edge features.
Indexing $\binom{V}{2}$ lexicographically, 
for $e\in\binom{V}{2}$, we write $\bw[e,:]$ for the vector of features associated with edge $e$.
The form of $\bw$ is then
$$\bw=\begin{pmatrix}
\bw[\{v_{1},v_{2}\},:]\\
\bw[\{v_{1},v_{3}\},:]\\
\bw[\{v_{1},v_{4}\},:]\\
\vdots\\
\bw[\{v_{n-1},v_{n}\},:]
\end{pmatrix}.$$
We further require that
$\bw[e,:]=(\star,\star,\cdots,\star)\in\mre^{d_2}$
for all $e\notin E$,
and $\bw[e,:] \in \mathcal{E}^{d_2}$ for all $e\in E$.
\end{itemize}
We use $\mathcal{G}_{n,\cv,\ce}^{\,d_1,d_2}$ to denote
the set of all richly-featured networks indexed by
$(n,d_1,d_2,\mathcal{V},\mathcal{E})$.
\end{definition}
\noindent Let $e\in\binom{V}{2}$.
In the definition of richly-featured networks, 
for $e\notin E$, we interpret the edge features $\bw[e,:]$ as unavailable data.  
This is a sensible assumption in practice, 
and is commonly made in attributed network models;  
see, for example \cite{pfeiffer2014attributed,zhou2012factorized}.
We note that the structure of $\bw$ encodes the edge structure of $g$,
but we choose to keep the redundant information in Definition~\ref{def:richfeat}, as $g$ encodes the purely topological structure of the graph, absent any edge- or vertex-level features.
This fact will prove useful below, when we seek investigate the role of graph structure in the absence of features and vice versa.

\begin{remark}
We use discrete vertex and edge feature sets in Definition~\ref{def:richfeat}, as this is both rich enough to model many real world networks (where features often encode types or characteristics of nodes or edges, and edge weights often derive from count data) and amenable to the theoretical derivations in vertex nomination.
Considering continuous features is not a practical problem, but does raise subtle measure-theoretic difficulties in the theory to follow.
See Remark~\ref{rem:measurability} for further discussion.
\end{remark}


\begin{example}
Consider the graph $\bbg\in\mathcal{G}_{4,\cv,\ce}^{\,d_1,d_2}$ with 
$g$ given by
\begin{equation*}
g=(V_g,E_g)=\left(\{1,2,3,4\},\{\ \{1,2\},\{1,3\},\{1,4\},\{3,4\}\ \} \right).
\end{equation*}
The edge features for this network would then be of the form
$$\bw=\begin{pmatrix}
\bw[\{v_1,v_2\},:]\\
\bw[\{v_1,v_3\},:]\\
\bw[\{v_1,v_4\},:]\\
\bw[\{v_2,v_3\},:]\\
\bw[\{v_2,v_4\},:]\\
\bw[\{v_3,v_4\},:]
\end{pmatrix}
=\begin{pmatrix}
\bw[\{v_1,v_2\},:]\in\mathcal{E}^{d_2}\\
\bw[\{v_1,v_3\},:]\in\mathcal{E}^{d_2}\\
\bw[\{v_1,v_4\},:]\in\mathcal{E}^{d_2}\\
\star,\star,\cdots,\star\\
\star,\star,\cdots,\star\\
\bw[\{v_3,v_4\},:]\in\mathcal{E}^{d_2}
\end{pmatrix}.$$
\end{example}
\begin{remark}
    Let $(n,d_1,d_2)$ be an ordered tuple of nonnegative integers, and let $\cv$ and $\ce$ be discrete sets of vertex and edge features, respectively.
    In the definitions and exposition that follow, we consider $\gnv^{d_1,e_1}$-valued random variables.  
    Implicitly, we mean the following: letting $(\Omega,\mathcal{F},\p)$ be a given probability space, 
    $(G,\bX,\bW):\Omega\mapsto\gnv^{d_1,d_2}$ is a $\gnv^{d_1,d_2}$-valued random variable if it is $(\mathcal{F},\mathcal{F}_{G_n}\otimes \mathcal{F}_{d_1}\otimes \mathcal{F}_{d_2}^*)$-measurable, where 
    $\mathcal{F}_{G_n}$ is the total sigma field on $\gn$, $\mathcal{F}_{d_1}$ is the total sigma field on $\cv^{n\times d_1}$, and 
    $\mathcal{F}_{d_2}^*$ is the total sigma field on $\mre^{N\times d_2}$.
\end{remark}

With Definition \ref{def:richfeat} in hand, lifting the definition of
 {\em Nominatable Distributions} first introduced in \cite{lyzinski2017consistent}
 to the attributed graph setting is relatively straightforward.
\begin{definition} \label{def:richfeatNom}
Given $n,m\in\Z_{>0}$ and sets of discrete vertex and edge features $\cv$ and $\ce$, respectively, the set of \emph{Richly Featured Nominatable Distributions of order $(n,m)$ with feature sets $\cv$ and $\ce$}, which we denote $\mathcal{F}_{\cv,\ce}^{(n,m)}$, is the collection of all families of distributions of the form 
\begin{equation*} \begin{aligned}
\mathbf{F}^{(n,m)}=\Big\{&\FF \ : \ 
\big(c,\theta,(d_1,e_1),(d_2,e_2)\big) \in \Z_{\geq 0}\times\mathbb{R}^{d(n,m)}\times \Z^2_{>0}\times \Z^2_{>0}, \\
&~~~~~~~~~~~~\text{ and } 0\leq c\leq\min(n,m) \Big\},
\end{aligned} \end{equation*}
where $\FF$ is a distribution on $\mathcal{G}_n \times \cv^{n\times d_1}\times \mre^{N\times e_1}\times\mathcal{G}_m\times \cv^{m\times d_2}\times \mre^{M\times e_2}$
(recalling that $N=\binom{n}{2}, M=\binom{m}{2}$),
parameterized by $\theta \in \R^{d(n,m)}$
and satisfying the following conditions: 
\begin{enumerate}
\item{The vertex sets $V_1 = \{v_1,v_2,...,v_n\}$ and 
$V_2 = \{u_1,u_2,...,u_m\}$ satisfy $v_i = u_i$ for $1 \le i \leq c$. 
We refer to $C = \{v_1,v_2,...,v_c\} = \{u_1,u_2,...,u_c\}$ as the {\em core vertices}. These are the vertices that are shared across the two graphs and imbue the model with a natural vertex correspondence.}
\item{Vertices in $J_1 = V_1 \setminus C$ and $J_2 = V_2 \setminus C$, satisfy $J_1 \cap J_2 = \emptyset$. We refer to $J_1$ and $J_2$ as {\em junk vertices}. These are the vertices in each graph that have no corresponding vertex in the other graph}
\item{
If $(G_1,\bX,\bW,G_2,\bY,\bZ)$ is distributed according to $\FF$, then
$(G_1,\bX,\bW)$ is a $\gnv^{\,d_1,e_1}$-valued random variable and  
$(G_2,\bY,\bZ)$ is a $\gmv^{\,d_2,e_2}$-valued random variable.
The edge features $\bW\in\mre^{N\times e_1}$ and $\bZ\in\mre^{M\times e_2}$ almost surely satisfy
$$\bW[e,:]=\begin{cases}
(\star,\star,\cdots,\star)\in\mre^{e_1}&\text{ if }e\notin E(G_1);\\
\in \ce^{e_1}&\text{ if }e\in E(G_1);
\end{cases}$$
and 
$$\bZ[e,:]=\begin{cases}
(\star,\star,\cdots,\star)\in\mre^{e_2}&\text{ if }e\notin E(G_2);\\
\in \ce^{e_2}&\text{ if }e\in E(G_2).
\end{cases}$$
}
\item{The richly-featured subgraphs induced by the junk vertices,
$$\left(G_1[J_1],{\bf X}[J_1,:],{\bf W}\left[\binom{J_1}{2},\,:\,\right]\right)
	\text{ and }
\left(G_2[J_2],{\bf Y}[J_2,:],{\bf Z}\left[\binom{J_2}{2},\,:\,\right]\right)$$
are conditionally independent given $\theta$.}
\end{enumerate}
\end{definition}

\noindent In Definition~\ref{def:richfeatNom}, the rows of
${\bf X}\in \cv^{n\times d_1}$ are the vertex features of $G_1$, with 
$\bX[i,:]$ representing the feature associated with vertex $i$ in $G_1$.
Similarly, the rows of
${\bf Y}\in \cv^{n\times d_2}$ are the vertex features of $G_2$,
with 
$\bY[i,:]$ representing the vertex feature of vertex $i$ in $G_2$.
We do not, a priori, assume that any vertex features are missing, although extending the definition to $\widetilde\cv=\cv\cup\{\star\}$ is straightforward.
With this definition in place, we are ready to define feature-aware vertex nomination schemes.
\vspace{2mm}

\noindent{\bf Note:} In order to ease notation below, we will write
\begin{equation*}
    \Theta:=(c,\theta,(d_1,e_1),(d_2,e_2)),
\end{equation*}
and accordingly write $\FFt$ for $\FF$.
In the sequel, we will assume that the feature sets $\ce$ and $\cv$ are given, and satisfy $|\ce|=|\cv|=\infty$.
We will suppress the dependence of the family of richly-featured nominatable distributions on the feature sets $\ce$ and $\cv$, writing $\mathcal{F}^{(n,m)}$ in place of $\mathcal{F}_{\cv,\ce}^{(n,m)}$.

\subsection{Vertex Nomination Schemes}
\label{sec:VNschemes}

In vertex nomination, the labels of vertices in the second graph, $\bbg_2$, are assumed unknown a priori.  
In order to accomplish this in our Featured Nominatable Distribution framework, we introduce
{\em obfuscating functions} as in \cite{lyzinski2017consistent}.
Obfuscation functions serve to hide vertex labels, and can be interpreted as a non-probabilistic version of the vertex shuffling considered in \cite{vogelstein2011shuffled} and \cite{lyzinski2016information}.

\begin{definition} \label{def:obfuscate}
Consider graphs $(\bbg_1,\bbg_2)\in\gnv^{d_1,e_1}\times\gmv^{d_2,e_2}$ with vertex sets $V_1$ and $V_2$, respectively.
An {\em obfuscating set}, $H$, of $V_1$ and $V_2$ of order $|V_2|=m$ is a set satisfying $H \cap V_i = \emptyset$ for $i = 1,2,$ and $|H|=|V_2|=m$.
Given obfuscating set $H$, an \emph{obfuscating function} $\mo : V_2 \rightarrow H$ is a bijection from $V_2$ to $H$. We denote by $\mO_H$ the set of all such obfuscating functions.
For a richly-featured network ${\bf g}=(g,\bx,\bw)\in \gmv^{\,d_2,e_2}$, we will write 
$\mo({\bf g})=(\mo(g),\mo(\bx),\mo(\bw))$ where
\begin{itemize}
\item[i.] $\mo(g)$ denotes the graph $g = (V_g,E_g)$ with labels obfuscated by $\mo$. That is, $\mo(g)=(V_{\mo(g)},E_{\mo(g)})$, where $V_{\mo(g)}=\{ \mo(v) : v \in V_g)$ and $E_{\mo(g)}$ is such that $\{u,v\}\in E_g$ if and only if $\{\mo(u),\mo(v)\}\in E_{\mo(g)}$.
\item[ii.] $\mo(\bx)$ is the vertex feature matrix associated with $\mo(g)$, so that for $u\in H$,
$$
(\mo(\bx))[u,:]=\bx[\mo^{-1}(u),:].
$$
\item[iii.] $\mo(\bw)$ is the edge feature matrix associated with $\mo(g)$, so that for $\{v,u\}\in \binom{H}{2}$,
$$
\left(\mo(\bw)\right)[\{v,u\},:]=\bw\left[\left\{\mo^{-1}(v),\mo^{-1}(u)\right\},:\right].
$$
\end{itemize}
Note that we will assume that $H$ is endowed with an arbitrary but fixed ordering, and that the edges of $\mo(\bw)$ are ordered lexicographically according to this ordering on $H$.
We do not necessarily assume that the ordering of $H$ is the ordering induced by $V$. That is, we do not necessarily assume that $u\leq v$ implies $\mo(u)\leq\mo(v)$.
\end{definition}

Relating this definition back to Definition~\ref{def:richfeatNom}, the purpose of the obfuscating function is to render the labels on the vertices in $\bbg_2$ uninformative with respect to the correspondence between $\bbg_1$ and $\bbg_2$ encoded in the core vertices $C$.
Following this logic, it is sensible to require vertex nomination schemes (defined below) to be independent of vertex labels.
Informally, if a set of vertices have identical features and edge structures among them, then their rankings in a VN scheme should be independent of the chosen obfuscating function $\mo\in\mO_H$.
This is made precise in Definition~\ref{def:VN} and Assumption~\ref{ass:FAVN} below, but requires some preliminary definitions.

\begin{definition}[Action of a permutation on a richly-featured network]
Let ${\bf g}=(g,\bx,\bw)\in \gnv^{\,d_1,d_2}$ be a richly-featured network.
A permutation $\sigma:[n]\mapsto[n]$ acts on ${\bf g}$ to produce
$\sigma({\bf g}) =(g',\bx',\bw') \in \gnv^{\,d_1,d_2}$,
where
\begin{itemize}
%
\item[i.] $g'=\sigma(g)$ is the graph $g$ with its vertex labels permuted by $\sigma$.
\item[ii.] $\bx'$ is the vertex feature matrix associated with $g'$, so that for $v\in[n]$,
$$
\bx'[v,:]=\bx[\sigma^{-1}(v),:].
$$
\item[iii.] $\bw'$ is the edge feature matrix associated with $g'$, so that for $\{u,v\}\in \binom{[n]}{2}$,
$$
\bw'[\{u,v\},:]=\bw\left[\left\{\sigma^{-1}(u),\sigma^{-1}(v)\right\},:\right].
$$
\end{itemize}
\end{definition}

\begin{definition}[feature-preserving automorphisms and isomorphisms]
\label{def:fauto}
We call a permutation $\sigma$ a \emph{feature-preserving automorphism} (abbreviated f-automorphism) of ${\bf g}$ if ${\bf g}=\sigma({\bf g})$.
Similarly, We call a permutation $\sigma$ a \emph{feature-preserving isomorphism} between ${\bf g}$ and ${\bf g}'$ (abbreviated f-isomorphism) if ${\bf g}'=\sigma({\bf g})$.
\end{definition}

Let ${\bf g}=(g,\bx,\bw)\in \gnv^{\,d_1,d_2}$ be a richly-featured network.
For each $u\in V_g$, define
\begin{equation*}
\mathcal{I}(u;{\bf g}):=\{w\in V_g\text{ s.t. } 
	\exists\text{ an f-automorphism }\sigma\text{ of }{\bf g},\text{ s.t. }\,\sigma(u)=w\}.
\end{equation*}
With the above notation in hand, we are now ready to introduce the concept of a feature-aware vertex nomination scheme (often abbreviated {\em VN scheme} in the sequel).
In the definition to follow, $V^*$ represents the set of \emph{vertices of interest} in $\bbg_1$.
These are usually assumed to be in $V_1\cap V_2$, and the goal of a VN scheme is to have $\mo(V^*)$ concentrate at the top of the produced rank list in $\mathcal{T}_H$.
\begin{definition}[Feature-aware VN Scheme]
\label{def:VN}
Let $n,m,d_1,e_1,d_2,e_2\in \Z_{>0}$ and $\cv$, $\ce$ be given.
Let $H$ be an obfuscating set of $V_1$ and $V_2$ of order $|V_2|=m$, and let $\mo\in\mathfrak{O}_H$ be given.
For a set $A$, let $\calT_A$ denote the set of all total orderings of the elements of $A$.
A {\em feature-aware vertex nomination scheme (FA-VN scheme)} on $\gnv^{\,d_1,e_1} \times \mo(\gmv^{\,d_2,e_2})$ is a function 
\begin{equation*}
\Phi: \gnv^{\,d_1,e_1} \times \mo(\gmv^{\,d_2,e_2}) \times 2^{V_1} \rightarrow \calT_{H}
\end{equation*}
satisfying the consistency criteria in Assumption~\ref{ass:FAVN}.
We let $\mathcal{N}^{(n,m)}=\mathcal{N}^{(n,m)}_{(d_1,e_1),(d_2,e_2)}$ denote the set of all such VN schemes.
\end{definition}

The consistency criteria required of FA-VN schemes essentially forces the schemes to be agnostic to the labels in the obfuscated $\mo({\bf g}_2)$.  
To accomplish this, we define the following.
\begin{assu}[FA-VN Consistency criteria]
\label{ass:FAVN}
With notation as in Definition \ref{def:VN},
for each $u\in V_2$ and $V^* \subseteq V_1$, define
\begin{equation*}
\rank_{\Phi({\bf g_1 },\mo({\bf g_2}),V^*)}\big(\mo(u)\big)
\end{equation*}
to be the position of $\mo(u)$ in the total ordering provided by $\Phi({\bf g_1},\mo({\bf g_2}),V^*)$. 
Further, define
$\mathfrak{r}_{\Phi}:\gn^{\,d_1,e_1}\times\gm^{\,d_2,e_2}\times\mathfrak{O}_H\times 2^{V_1}\times2^{V_2}\mapsto 2^{[m]}$ according to
\begin{equation*}
\mathfrak{r}_{\Phi}({\bf g_1},{\bf g_2},\mo,V^*,S)=\{\rank_{\Phi({\bf g_1},\mo({\bf g_2}),V^*)}\big(\mo(u)\big)\text{ s.t. }u\in S \}.
\end{equation*}
For any ${\bf g_1}\in\gnv^{\,d_1,e_1},$ ${\bf g_2}\in\gmv^{\,d_2,e_2}$, $V^*\subset V_1$, obfuscating functions $\mo_1,\mo_2\in\mathfrak{O}_H$ and any $u\in V(g_2)$, we require
\begin{align}
\label{eq:consis}
&\mathfrak{r}_{\Phi}({\bf g_1},{\bf g_2},\mo_1,V^*,\mathcal{I}(u;{\bf g_2}))=\mathfrak{r}_{\Phi}\left({\bf g_1},{\bf g_2},\mo_2,V^*,\mathcal{I}(u;{\bf g_2}) \right)\\
\notag &\Leftrightarrow \mo_2\circ\mo_1^{-1}\big( \mathcal{I}(\Phi({\bf g_1},\mo_1({\bf g_2}),V^*)[k]);\mo_1({\bf g_2})\big)=\mathcal{I}\left( \Phi({\bf g_1},\mo_2({\bf g_2}),V^*)[k];\mo_2({\bf g_2}) \right)\\
& \hspace{25mm}\text{ for all }k\in[m],\notag
\end{align}
where $\Phi({\bf g_1},\mo({\bf g_2}),V^*)[k]$ denotes the $k$-th element (i.e., the rank-$k$ vertex) in the ordering $\Phi({\bf g_1},\mo({\bf g_2}),V^*)$.
\end{assu}

Figure~\ref{fig:goodvn} gives a simple illustrative example of this consistency criterion (i.e., Eq.~\ref{eq:consis}) in action.
Note here that if $\mathcal{I}(u;{\bf g}_2)=\{u\}$ for all $u\in V_2$, then the consistency criterion forces 
$$\Phi({\bf g_1},\sigma(\mo({\bf g_2})),V^*)=
\sigma(\Phi({\bf g_1},\mo({\bf g_2}),V^*))$$
for any permutation $\sigma$ and obfuscating $\mo\in\mO_H$.

\begin{figure}[ht!]
    \centering
    \begin{tabular}{c|c}
    \includegraphics[width=0.45\textwidth]{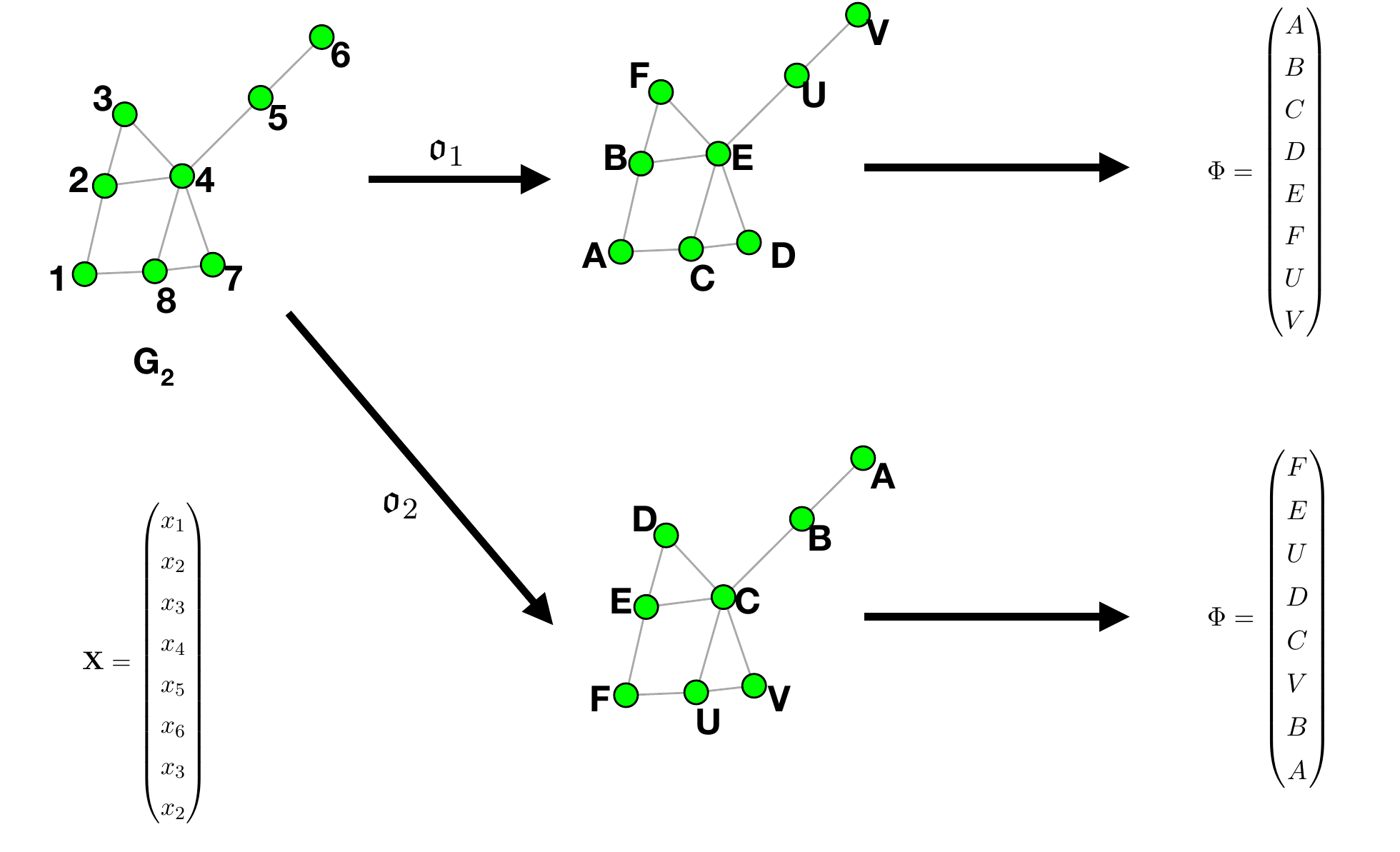} & 
     \includegraphics[width=0.45\textwidth]{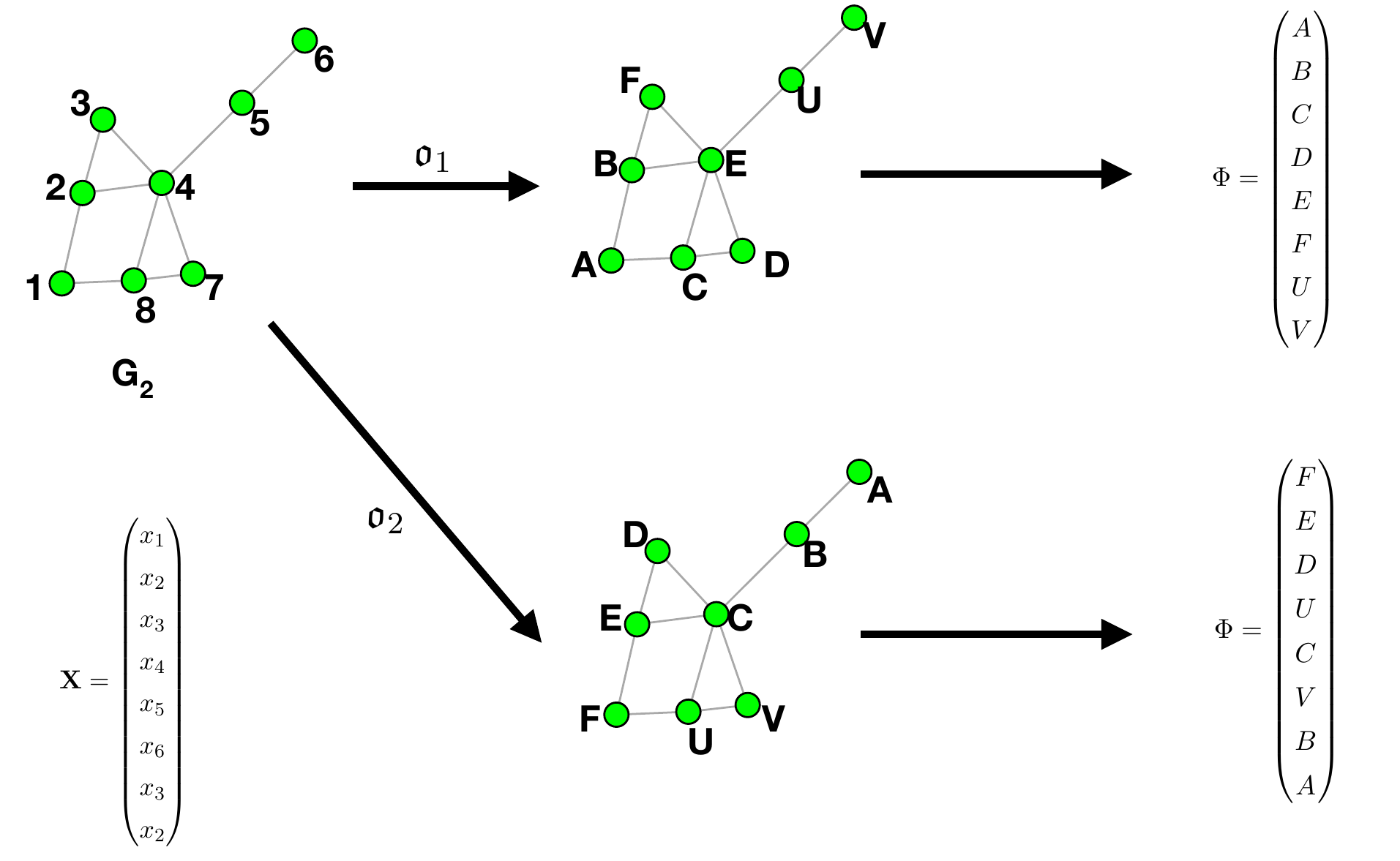}\\
     (a) Internally consistent scheme &
    (b)  Inconsistent scheme
    \end{tabular}
    \caption{
  Example of VN schemes that (a) satisfy and (b) do not satisfy the consistency criterion in Eq.\@ \ref{eq:consis}.
  Both subplots (a) and (b) denote two different obfuscation functions applied to the same graph.
  Here,
  $\mathcal{I}(1;G_2)=\{1\}$,
   $\mathcal{I}(2;G_2)=\mathcal{I}(8;G_2)=\{2,8\}$,
    $\mathcal{I}(3;G_2)=\mathcal{I}(7;G_2)=\{3,7\}$,
    $\mathcal{I}(4;G_2)=\{4\}$,
    $\mathcal{I}(5;G_2)=\{5\}$
    and $\mathcal{I}(6;G_2)=\{6\}$.
  Under $\mo_1$, $\{2,8\}$ is mapped to $\{B,C\}$, while under $\mo_2$, $\{2,8\}$ is mapped to $\{E,U\}$.
  The consistency property in Definition~\ref{def:VN} requires that the ranking of $\{2,8\}$ must be the same under $\mo_1$ and $\mo_2$.
  That is, $\{B,C\}$ and $\{E,U\}$ should be ranked the same under $\mo_1$ and $\mo_2$, respectively.
  This requirement is obeyed by the VN scheme illustrated in subplot (a).
  On the other hand, the scheme in subplot (b) violates the consistency property: it ranks $\{B,C\}$ as second and third under $\mo_1$, but ranks $\{E,U\}$ as second and fourth under $\mo_2$.
  }
    \label{fig:goodvn}
\end{figure}

In VN and other IR ranking problems, ties due to identical structure (here represented by f-isomorphisms in ${\bf g}_1$ or ${\bf g}_2$)
cause theoretical complications.
We refer the interested reader to \cite{lyzinski2017consistent}
and \cite{agterberg2019vertex} for examples of these complications and how they can be handled.
In order to avoid the additional notational and definitional burdens required to deal with tie-breaking in these situations,
we will make the following assumption on the distributions considered in $\mathcal{F}^{(n,m)}$.
\begin{assu}
\label{ass:A1}
Let $({\bf G}_1,{\bf G}_2)\sim F^{(n,m)}_{\Theta}\in\mathcal{F}^{(n,m)}$
and consider the events
\begin{align*}
D_1&=\{\text{ the only f-automorphism of }{\bf G}_1\text{ is }\sigma=\id_n  \}\\
D_2&=\{\text{ the only f-automorphism of }{\bf G}_2\text{ is }\sigma=\id_m  \}.
\end{align*} 
$F^{(n,m)}_{\Theta}$ satisfies $\p_{\FFt}( D_1)=\p_{\FFt}( D_2)=1$.
\end{assu}
This assumption is unrealistic if there are only a few categorical vertex features (for example, roles in a corporate hierarchy), but this assumption is less restrictive when there are a large number of available categorical features or the features are continuous.
We stress that this assumption is made purely to ease the presentation of theoretical material, and the practical impact of this assumption being violated is easily overcome.

\subsection{Loss and Bayes Loss}
\label{sec:loss}

A vertex nomination scheme is, essentially, a semi-supervised IR system for querying large networks.
Similar to the recommender system framework \cite{rshandbook}, a VN scheme is judged to be successful if the top of the nomination list contains a high concentration of vertices of interest from the second network.
This motivates the definition of VN loss based on the concept of precision-at-$k$. 

\begin{definition} \label{def:vnloss}
Let $\Phi \in \nnm=\nnmf$ be a VN scheme, $H$ an obfuscating set of $V_1$ and $V_2$ of order $|V_2|=m$, and $\mo\in\mathfrak{O}_H$.  
Let $({\bf g_1},{\bf g_2})$ be realized from 
\begin{equation*}
(G_1,\bX,\bW,G_2,\bY,\bZ) \sim \FFt\in\mathcal{F}^{(n,m)},
\end{equation*}
with a vertex of interest set $V^* \subset C=V_1\cap V_2$.
For $k \in [m-1]$, we define
\begin{itemize}
    \item[(i)] For $(\bbg_1,\bbg_2)$ realized as $(G_1,\bX,\bW,G_2,\bY,\bZ)$, the \emph{level-$k$ nomination loss}
\begin{equation*} \begin{aligned}
\ell_k (\Phi,{\bf g_1},\mo({\bf g_2}),V^*) :&=
1 - \frac{1}{k}\sum_{v \in V^*} \mathds{1}\{ \rank_{\Phi({\bf g_1}, \mo({\bf g_2}), V^*)}( \mo(v)) \leq k \} 
\end{aligned} \end{equation*}
\item[(ii)] The \emph{level-$k$ error} of $\Phi$ is defined as 
\begin{equation*} \begin{aligned}
L_k (\Phi, V^*) &=L_k (\Phi, V^*,\mo) := 
\e_{({\bf G_1},{\bf G_2}) \sim \FFt}[ \ell_k (\Phi, {\bf G_1},\mo({\bf G_2}), V^*)] \\
&=1- \frac{1}{k} \sum_{v \in V^*} 
\p_{\FFt} 
\bigg( \rank_{\Phi({\bf G_1}, \mo({\bf G_2}),V^*)}(\mo(v)) \leq k  \bigg),
\end{aligned} \end{equation*}
where ${\bf G_1}=(G_1,\bX,\bW)$ and ${\bf G_2}=(G_2,\bY,\bZ)$.
\end{itemize}
The \emph{level-k Bayes optimal scheme} for $\FFt$ is defined as any element
\begin{equation*}
\Phi^*_{k}=\Phi^*_{k,V^*}\in \argmin_{\phi\in\nnm} L_k (\Phi, V^*),
\end{equation*}
with corresponding Bayes error $L^{*}_k$.
\end{definition}

\begin{remark}
{Note that we could have also defined a recall-based loss function via
\begin{equation*}
\ell^{(r)}_k (\Phi,{\bf g_1},{\bf g_2},V^*) :=
\frac{1}{|V^*|}
\sum_{v \in V^*} \mathds{1}\{ \rank_{\Phi({\bf g_1}, \mo({\bf g_2}), V^*)}( \mo(v)) \geq k + 1\} .
\end{equation*}
We focus on the more natural precision-based loss function, but we note in passing that consistency and Bayes optimality with respect to these two loss functions is equivalent when $|V^*|=O(1)$.}
\end{remark}

\section{Bayes Optimal Vertex Nomination Schemes}
\label{sec:BOO}

In \cite{lyzinski2017consistent}, a Bayes optimal VN scheme (i.e., one that achieves optimal expected VN loss) was derived in the setting where one observes a network {\em without} features.
In the feature-rich setting, derivations are similar, though they require a more careful technical treatment. After some preliminary work, this section culminates in the definition of the feature-aware Bayes optimal scheme in Section~\ref{sec:BO}.


\subsection{Obfuscating Features}

We are now faced with the problem of modeling the effect of the obfuscating function on features under the VN framework.
If we observe $\mo(\bbg_2)$, then
we have no knowledge of which member of
\begin{equation*}
    [\bbg_2]:=\{\bbg_2':\bbg_2\simeq\bbg_2'\}
\end{equation*}
was obfuscated,
but we do know what features are associated to each of the vertices
and edges.
In order to model this setting, we adopt the following conventions.
Let
$n,m,d_1,e_1,d_2,e_2\in\mathbb{Z}_{>0}$
and $\cv$, $\ce$ be given.
Furthermore, let the set of vertices of interest, $V^*\subseteq C = V_1\cap V_2$, be fixed.
Let $H$ be an obfuscating set of of $V_1$ and $V_2$ of order $|V_2|=m$, and $\mo\in\mathfrak{O}_H$.
Define $\Acnm$ to be the set of asymmetric richly-featured graphs
\begin{align*}
\mathcal{A}_{n,m}=\Acnm
&:=\Big\{({\bf g}_1,{\bf g}_2)\in\gnf\times\gmf ~:~ \text{there are no }\\
&~~~~~~~~~\text{ non-trivial f-automorphisms of }{\bf g}_1\text{ or }{\bf g}_2\Big\}.
\end{align*}
Under Assumption \ref{ass:A1}, $\FFt$ is supported on $\mathcal{A}_{n,m}$.

For each $({\bf g_1},{\bf g_2})\in \gnv^{(d_1,e_1)}\times\gmv^{(d_2,e_2)}$, define
\begin{equation*}
    \begin{aligned}
\left({\bf g_1}, [\mo(\bbg_2)] \right) &= \bigg\{ (\bbg_1, \widehat \bbg_2) \in \gnv^{(d_1,e_1)}\times\gmv^{(d_2,e_2)}\text{ s.t. } \mo(\widehat \bbg_2) \simeq \mo(\bbg_2) \bigg\} \\
&=\bigg\{ (\bbg_1, \widehat \bbg_2) \in \gnv^{(d_1,e_1)}\times\gmv^{(d_2,e_2)}\text{ s.t. } \widehat \bbg_2 \simeq \bbg_2 \bigg\}.
\end{aligned}
\end{equation*}
Note that if $({\bf g_1},{\bf g_2})\in \mathcal{A}_{n,m}$ the asymmetry of ${\bf g_2}$ yields that 
\begin{equation*}
\Big|\Big({\bf g_1}, \big[\mo(\bbg_2)\big]\Big)\Big|=m!.
\end{equation*}
In light of the action of the obfuscating function on the features and vertex labels of $\bbg_2$, we view 
$(\bbg_1,[\mo(\bbg_2)])$ as the set of possible graph pairs that could have led to the observed graph pair 
$(\bbg_1,\mo(\bbg_2))$.

For each $u \in H$ and $v \in V_2$, we also define the following restriction:
\begin{equation*} \begin{aligned}
\gog_{u = \mo(v)} 
=\bigg\{& (\bbg_1, \widehat \bbg_2) \in \gnv^{(d_1,e_1)}\times\gmv^{(d_2,e_2)}\text{ s.t. } \mo(\widehat \bbg_2) = \sigma( \mo(\bbg_2)),\\
&~~~~~~\text{where $\sigma$ is an f-isomorphism satisfying }\sigma(u)=\mo(v) \bigg\} \\
=\bigg\{&  (\bbg_1, \widehat \bbg_2) \in \gnv^{(d_1,e_1)}\times\gmv^{(d_2,e_2)}\text{ s.t. } \widehat \bbg_2 = \sigma(\bbg_2),\\
&~~~~~~\text{ where $\sigma$ is an f-isomorphism satisfying } \sigma(\mo^{-1}(u))= v\bigg\},
\end{aligned} \end{equation*}
and for $S\subseteq V_2,$ define 
$\gog_{u \in \mo(S)}=\bigcup_{v\in S}\,\,\gog_{u = \mo(v)}.$

\subsection{Defining Bayes Optimality} \label{sec:BO}
We are now ready to define a Bayes optimal scheme $\Phi^*$
for a given $\FFt\in\mathcal{F}^{(n,m)}$ satisfying Assumption \ref{ass:A1}.
We will define the scheme element-wise on each asymmetric $(\bbg_1,[\mo(\bbg_2)])$, and then systematically lift the scheme to all of $\Acnm$.
To wit, let
\begin{equation*}
    \left\{(\bbg_1^{(i)},\bbg_2^{(i)}) : i \in I \right\}
\end{equation*}
form a partition of $\gnv^{d_1,e_1}\times \gmv^{d_1,e_1}$.
To ease notation, we adopt the following shorthand.
\begin{enumerate}
    \item We use $\big( \bbg^{(i)}_1,\left[\mo(\bbg^{(i)}_2) \right]\big)$
    to denote the event $\left\{({\bf G_1},\mo({\bf G_2}))\in\left( \bbg^{(i)}_1,\left[\mo(\bbg^{(i)}_2) \right]\right)\right\}$.
\item For $u\in H$, we use $\big( \bbg^{(i)}_1,\left[\mo(\bbg^{(i)}_2) \right] \big)_{u=\mo(v)}$
    to denote the event
    \begin{equation*}
    \left\{({\bf G_1},\mo({\bf G_2}))\in
    \left( \bbg^{(i)}_1,\left[\mo(\bbg^{(i)}_2) \right]\right)_{u=\mo(v)}\right\}.
    \end{equation*}
    We will use this often with $u=\Phi(\bbg^{(i)}_1,\mo(\bbg^{(i)}_2),V^*)[j]$.
    
\item We use $\big( \bbg^{(i)}_1,\left[\mo(\bbg^{(i)}_2) \right]\big)_
{u\in \mo(V^*)}$ to denote the event
\begin{equation*}
    \left\{({\bf G_1},\mo({\bf G_2}))\in
\left( \bbg^{(i)}_1,\left[\mo(\bbg^{(i)}_2) \right]\right)_
{u\in\mo(V^*)}\right\}.
\end{equation*}
    We will use this often with $u=\Phi(\bbg^{(i)}_1,\mo(\bbg^{(i)}_2),V^*)[j]$.
\end{enumerate}
Let $\mathcal{S}$ denote the set of indices $i$ such that $(\bbg_1^{(i)},\bbg_2^{(i)})\in\mathcal{A}_{n,m}$.
That is, $\bbg_1^{(i)}$ and $\bbg_2^{(i)}$ are asymmetric as richly-featured networks.
For each $i\in\mathcal{S}$, writing $\p(\cdot)$ for $\p_{\FFt}(\cdot)$ to ease notation,
define
\begin{equation} \label{eq:BOFA}
\begin{aligned}
\Phi^*(\bbg^{(i)}_1 , \mo(\bbg^{(i)}_2), V^*)[1] 
&\in \argmax_{\substack{u\in H }}   \,\,
\p\bigg( (\bbg^{(i)}_1, [\mo(\bbg^{(i)}_2)])_{u\in \mo(V^*) }\,\, \big|\,\, (\bbg^{(i)}_1, [\mo(\bbg^{(i)}_2)]) \bigg) \notag\\
\Phi^*(\bbg^{(i)}_1 , \mo(\bbg^{(i)}_2), V^*)[2] 
&\in \argmax_{\substack{u\in H\setminus\{ \Phi^*[1]\}}}  
\,\,
\p\bigg( (\bbg^{(i)}_1, [\mo(\bbg^{(i)}_2)])_{u\in \mo(V^*) }\,\, \big|\,\, (\bbg^{(i)}_1, [\mo(\bbg^{(i)}_2)]) \bigg) \notag \notag\\
&\vdots \notag\\
\Phi^*(\bbg^{(i)}_1 , \mo(\bbg^{(i)}_2), V^*)[m] 
&\in \argmax_{\substack{u\in H\setminus\{\cup_{j< m}\{\Phi^*[j]\}}}  
\,\,
\p\bigg( (\bbg^{(i)}_1, [\mo(\bbg^{(i)}_2)])_{u\in \mo(V^*) }\,\, \big|\,\, (\bbg^{(i)}_1, [\mo(\bbg^{(i)}_2)]) \bigg) \notag ,
\end{aligned}
\end{equation}
where we break ties in an arbitrary but fixed manner.
For each element 
\begin{equation*}
    (\bbg_1,\bbg_2)\in \left( \bbg^{(i)}_1, [\mo(\bbg^{(i)}_2)] \right),
\end{equation*}
choose the f-isomorphism $\sigma$ such that $\mo(\bbg_2)=\sigma(\mo(\bbg^{(i)}_2))$, and define
\begin{equation*} 
 {\Phi^*}(\bbg_1,\mo(\bbg_2),V^*)=\sigma({\Phi^*}(\bbg^{(i)}_1,\mo(\bbg^{(i)}_2),V^*)).
\end{equation*}
For $i\not\in\mathcal{S}$, any fixed and arbitrary definition of $\Phi^*$ (subject to the consistency criterion in Definition~\ref{def:VN}) on $(\bbg^{(i)}_1, [\mo(\bbg^{(i)}_2)])$ will suffice, as this set has measure $0$ under $\FFt$ by Assumption \ref{ass:A1}.
Theorem~\ref{thm:BO} shows that $\Phi^*$ defined above is indeed level-$k$ Bayes optimal for all $k$ for any nominatable distribution satisfying Assumption \ref{ass:A1}.
A proof is given in Appendix~\ref{sec:proof!}.
\begin{theorem}
\label{thm:BO}
Let $\FFt\in\mathcal{F}^{(n,m)}$ satisfy Assumption~\ref{ass:A1}, and let $V^*\subset C=V_1\cap V_2$ be a given set of vertices of interest in ${\bf G}_1$.
The FA-VN scheme
$\Phi^*$ defined in Equation~\eqref{eq:BOFA} is a level-$k$ Bayes optimal scheme for $\FFt$ for all $k\in[m]$ and any obfuscating set $H$ and obfuscating function $\mo\in\mathfrak{O}_H$; i.e., 
$\Phi^*\in\argmin_{\Phi\in\nnm}L_k(\Phi,V^*)$ for all $k\in[m]$.
\end{theorem}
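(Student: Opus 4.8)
The plan is to reduce the level-$k$ risk $L_k(\Phi,V^*)$ to a sum of $k$ terms, each of which $\Phi^*$ maximizes by construction, and to argue that the greedy choice in Equation~\eqref{eq:BOFA} cannot be improved by any jointly-chosen ordering. First I would note that by Assumption~\ref{ass:A1} the distribution $\FFt$ is supported on $\mathcal{A}_{n,m}$, so it suffices to define and analyze $\Phi^*$ on the asymmetric part; the definition off $\mathcal{S}$ is immaterial since that event has probability $0$. Next, for a fixed obfuscating set $H$ and $\mo\in\mathfrak{O}_H$, I would rewrite
\begin{equation*}
L_k(\Phi,V^*) = 1 - \frac{1}{k}\sum_{v\in V^*}\p_{\FFt}\Big(\rank_{\Phi({\bf G_1},\mo({\bf G_2}),V^*)}(\mo(v))\le k\Big)
= 1 - \frac{1}{k}\sum_{j=1}^{k}\sum_{v\in V^*}\p_{\FFt}\Big(\Phi({\bf G_1},\mo({\bf G_2}),V^*)[j] = \mo(v)\Big),
\end{equation*}
swapping the "rank $\le k$" indicator for a sum over positions $j\in[k]$ (the events $\{\Phi[\,j\,]=\mo(v)\}$ are disjoint over $j$). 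So minimizing $L_k$ is the same as maximizing $\sum_{j=1}^{k}\p_{\FFt}\big(\Phi[\,j\,]\in\mo(V^*)\big)$.

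The second step is to condition on the partition block. Using the law of total probability over the blocks $\big(\bbg_1^{(i)},[\mo(\bbg_2^{(i)})]\big)$, $i\in\mathcal{S}$, each summand becomes
\begin{equation*}
\p_{\FFt}\big(\Phi[\,j\,]\in\mo(V^*)\big) = \sum_{i\in\mathcal{S}} \p\big(\bbg_1^{(i)},[\mo(\bbg_2^{(i)})]\big)\cdot \p\Big( \big(\bbg_1^{(i)},[\mo(\bbg_2^{(i)})]\big)_{\Phi[\,j\,]\in\mo(V^*)} \,\Big|\, \big(\bbg_1^{(i)},[\mo(\bbg_2^{(i)})]\big)\Big),
\end{equation*}
and here I would use the fact that on a given asymmetric block $\Phi$ is determined (via the consistency criterion of Definition~\ref{def:VN} and the lifting by the unique f-isomorphism $\sigma$) by its value on the representative $(\bbg_1^{(i)},\mo(\bbg_2^{(i)}))$, so $\Phi[\,j\,]$ on that block is just a fixed element $u_j\in H$. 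Thus the whole objective decouples across blocks: it suffices to maximize, block by block, $\sum_{j=1}^{k}\p\big((\bbg_1^{(i)},[\mo(\bbg_2^{(i)})])_{u_j\in\mo(V^*)}\,\big|\,\cdot\big)$ over choices of $k$ \emph{distinct} $u_1,\dots,u_k\in H$. This is a selection problem: pick $k$ distinct elements of $H$ to maximize the sum of their (nonnegative) weights $w(u):=\p\big((\bbg_1^{(i)},[\mo(\bbg_2^{(i)})])_{u\in\mo(V^*)}\,\big|\,(\bbg_1^{(i)},[\mo(\bbg_2^{(i)})])\big)$, whose solution is exactly to take the $k$ largest weights — which is precisely what the greedy recursion in Equation~\eqref{eq:BOFA} does for the first $k$ positions, for every $k$ simultaneously. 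Summing the per-block optimality back up gives $L_k(\Phi^*,V^*)\le L_k(\Phi,V^*)$ for all $\Phi\in\nnm$.

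The main obstacle I anticipate is the bookkeeping around the obfuscation and the consistency criterion: one must check carefully that (i) the event $\{\Phi[\,j\,]=\mo(v)\}$ really does, on an asymmetric block, reduce to a statement about a fixed vertex $u_j\in H$ independent of which member of the block was realized — this uses asymmetry ($|(\bbg_1,[\mo(\bbg_2)])|=m!$) together with the $\sigma$-equivariant lifting in the definition of $\Phi^*$; (ii) the conditional probabilities $\p\big((\bbg_1^{(i)},[\mo(\bbg_2^{(i)})])_{u\in\mo(V^*)}\mid\cdots\big)$ are well-defined (the conditioning event has positive probability for $i\in\mathcal{S}$, or else that block contributes $0$ and can be dropped); and (iii) the argument does not secretly depend on the choice of $H$ or $\mo$ — indeed changing $\mo$ permutes $H$ and relabels everything consistently, leaving each $w(u)$ and hence $L_k$ unchanged. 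Once these equivariance/measurability points are nailed down, the optimization itself (greedy = optimal for "top-$k$" selection, uniformly in $k$) is elementary, and the arbitrary-but-fixed tie-breaking does not affect the value of $L_k$.
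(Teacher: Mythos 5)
Your proposal is correct and follows essentially the same route as the paper's proof: decompose the level-$k$ risk over rank positions $j\in[k]$, condition on the partition blocks $\big(\bbg_1^{(i)},[\mo(\bbg_2^{(i)})]\big)$, use asymmetry plus the consistency criterion to reduce the event $\{\Phi[\,j\,]=\mo(v)\}$ on each block to a fixed-vertex statement, and conclude via per-block optimality of the greedy top-$k$ selection (the paper packages this per-block quantity as $R_k$ and the reduction as its supporting lemmas). Your explicit justification that greedy selection of the $k$ largest weights is optimal uniformly in $k$ is a point the paper leaves implicit, but it is the same argument.
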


For ease of reference in the sections to come, Table~\ref{tab:notation} summarizes the notation used in the paper so far.

\begin{table}[h!]
\centering
\begin{tabular}{|c|c|c|}
 \hline
 Symbol & Description & Definition\\
 \hline
 $[n]$ & For $n\in\mathbb{Z}_{>0}$, this denotes $\{1,2,3,\cdots,n\}$&-\\
 $\binom{S}{2}$
 & For a set $S$, this represents the &-\\
 & set $\{\,\{u,v\}\text{ s.t. }u,v\in S \}$&\\
$\mathcal{V}$ & Denotes a discrete set of vertex features &-\\
$\mathcal{E}$ & Denotes a discrete set of edge features &-\\
$\mre$ & The set $\mathcal{E}\cup\{\star\}$, where $\star$ is a special symbol & Def. \ref{def:richfeat}\\
& representing unavailable or missing data&\\
 $\gn$ & For $n\in\mathbb{Z}_{>0}$, the set of all labeled, undirected&-\\ &graphs on $n$ vertices&\\
 $\gnv^{\, d_1,d_2}$ & For $n,d_1,d_2\in\mathbb{Z}_{>0}$, the set of richly-featured& Def. \ref{def:richfeat} \\
 & networks of order $(n,d_1,d_2)$ with vertex (resp., edge)&\\
 & features in $\mathcal{V}^{d_1}$ (resp., $\mre^{d_2}$)&\\
 $V_g$ & For graph $g\in\gn$, $V_g$ denotes the set of vertices of $g$&-\\
 $E_g$ & For graph $g\in\gn$, $E_g$ denotes the set of edges of $g$&-\\
 $N$ & $\binom{n}{2}$&- \\
 $M$ & $\binom{m}{2}$&- \\
 $\vec{0}$ & The vector of all $0$'s&-\\
 $\bX[i,:]$ & This denotes the $i$-th row of a matrix $\bX$&- \\
 $\bX[S,:]$ & For a set $S$, this denotes the submatrix of $\bX$&-\\
 & with rows indexed by $S$& \\
 $\simeq$ & If $g_1,g_2\in\gn$ satisfy $g_1\simeq g_2$, then $g_1$ is&-\\ 
 & isomorphic to $g_2$&\\
 $\simeq$ & If $\bbg_1,\bbg_2\in\gnv^{\, d_1,d_2}$ satisfy $\bbg_1\simeq \bbg_2$, then $\bbg_1$ is&Def. \ref{def:fauto}\\ & feature-preserving isomorphic to $\bbg_2$ & \\
 $\Phi$ & A nomination scheme & Def.~\ref{def:VN} \\
 $\mathcal{N}^{(n,m)}=\mathcal{N}^{(n,m)}_{(d_1,e_1),(d_2,e_2)}$ & the set of all feature-aware VN schemes & Def.~\ref{def:VN} \\
 $L_k, L^*_k$ & level-$k$ error of a VN scheme & Def.~\ref{def:vnloss} \\
 \hline
\end{tabular}
\caption{Notation and definitions used in the remainder of the paper.}
 \label{tab:notation}
\end{table}


\section{Feature-Oblivious Vertex Nomination}
\label{sec:nofeat}

It is intuitively clear that incorporating features
should improve VN performance, provided those features
are correlated with vertex ``interestingness''.
Indeed, this is a common theme across many graph-based machine learning tasks
(see, for example, \cite{zhang2016final,binkiewicz2017covariate,meng2019co}), 
and the same holds in the present VN setting.
The combination of network structure and informative features can significantly improve the VN Bayes error.
Consider, for instance, the following simple example set in the context of the stochastic blockmodel \cite{Holland1983}. 
\begin{definition} \label{def:SBM}
An undirected $n$-vertex graph $G=(V,E)$ is an instantiation of a \emph{Stochastic Blockmodel} with parameters $(K,b,\Lambda)$, abbreviated $G\sim\text{SBM}(K,b,\Lambda)$, if:
\begin{itemize}
\item[1.] The vertex set $V$ is partitioned into $K$ communities $V=V_1\sqcup V_2\sqcup \ldots\sqcup V_K$;
\item[2.] The community membership function $b:V\rightarrow [K]$
	agrees with the partition of the vertices,
	so that $v \in V_{b(v)}$ for each $v \in V$;
\item[3.] $\Lambda$ is a $K\times K$ matrix of probabilities:
$\mathbbm{1}\left\{\{u,v\}\in E\right\}\sim\text{Bern}(\Lambda(b(u),b(v)))$
for each $\{u,v\}\in\binom{V}{2}$,
and the collection of random variables $\{\mathbbm{1}\left\{\{u,v\}\in E\right\} : \{u,v\}\in\binom{V}{2} \}$ are mutually independent.
\end{itemize}
\end{definition}

\begin{example}
\label{ex:gandf} 
Let $G_1,G_2$ be independent $2n$-vertex $\text{SBM}(2,\Lambda,b)$ random graphs with 
\begin{equation*}
\Lambda=\begin{pmatrix}
a& b\\
b& c
\end{pmatrix},\hspace{3mm}
b(v)=\begin{cases}
1&\text{ if } 1\leq v\leq n\\
2&\text{ if } n+1\leq v\leq 2n
\end{cases},
\end{equation*}
where $b<a<c$ are fixed (i.e., do not vary with $n$).
Edges in both $G_1$ and $G_2$ are independent and the probability of an edge between vertices $\{u,v\}$ is equal to
\begin{equation*}
\p( \{u,v\} \in E(G_i) )=
\begin{cases}
a&\text{ if } \{u,v\}\subset [n];\\
c&\text{ if } \{u,v\}\subset \{n+1,\cdots,2n\};\\
b&\text{ otherwise. }
\end{cases}
\end{equation*}
Take $V^*=\{v_1\}$ with corresponding vertex of interest $u^*=u_1\in V_2$ with $b(v_1)=b(u_1)=1$.
In the absence of features, $L^*_k=(1+o(1))(1-\frac{\min(k, n)}{n})$,
owing to the fact that vertices in the same community are stochastically identical.
If the graphs are endowed with one-dimensional edge features $X,Y\in\R^{2n}$, 
\begin{equation*}
X^T=Y^T=[\underbrace{1,1,\cdots,1}_{n/2\text{ total}},\underbrace{2,2,\cdots,2}_{n\text{ total}},\underbrace{1,1,\cdots,1}_{n/2\text{ total}}],
\end{equation*}
we see that a ranking scheme that ignores network structure
can do no better than randomly ranking the vertices with feature $1$,
and thus has a loss $L^*_k=1-\frac{\min(k, n)}{n}$.
In contrast, if one considers the richly attributed graphs
$(G_1,X)$ and $(G_2,Y)$, the Bayes optimal loss is improved to
$L^*_k=(1+o(1))(1-\frac{\min(k, n/2)}{n/2})$ for all $k,n$,
as the network topology and vertex features offer complementary information.
\end{example}

Can Bayes optimal performance in VN ever be improved by ignoring features?
To answer this question,
we must first define a scheme that both ignores features \emph{and} satisfies
the consistency criteria in Definition \ref{def:VN},
and it is not immediately obvious how to do so.

Firstly, we must define what it means to ignore features in a richly-featured network.
Toward this end, consider the following definition, which defines a procedure for mapping a richly-featured network to a simple graph structure (i.e., the network stripped of its feature information).
\begin{definition}
Let $N=\binom{n}{2}$.
For $\bw\in\mre^{N\times d_2}$, we define $\gamma(\bw) = (V_{\gamma(\bw)},E_{\gamma(\bw)})\in\gn$ to be the graph compatible with the edge features in $\bw$.
That is, $\gamma(\bw)\in\mathcal{G}_n$, and
$\bw$ is such that
$\bw[e,:]=(\star,\star,\cdots,\star)$ for all $e \not \in E_{\gamma(\bw)}$
and $\bw[e,:]\in\mathcal{E}^{d_2}$ for all $ e\in E_{\gamma(\bw)}.$
\end{definition}

We stress that a priori, it is not necessarily clear that a VN scheme exists that simultaneously ignores features and satisfies the consistency requirements of Definition~\ref{def:VN}.
We illustrate this point with a brief example.
A scheme $\Phi$ that ignores features must rank vertices identically regardless of features, so long as the features are compatible in the sense of the mapping $\gamma$ just defined.
More formally, it must be that for all $(g_1,g_2)\in\gn\times\gm$ and all features
\begin{equation*}
(\bx,\bw,\by,\bz),
(\bx',\bw',\by',\bz')\in\cv^{n\times d_1}\times
\widetilde\ce^{N\times e_1}
\times
\cv^{m\times d_2}
\times
\widetilde\ce^{M\times e_2}
\end{equation*}
with $\gamma(\bw)=\gamma(\bw')=g_1$ and 
$\gamma(\bz)=\gamma(\bz')=g_2$,
\begin{equation}
\label{eq:FA}
\Phi\left((g_1,\bx,\bw),\mo(g_2,\by,\bz),V^*\right)
=
\Phi\left((g_1,\bx',\bw'),\mo(g_2,\by',\bz'),V^*\right).
\end{equation}
Now consider $\bbg_1=(g_1,\bx,\bw)$ and $\bbg_2=(g_2,\by,\bz)$ that are asymmetric as attributed networks (i.e., have no non-trivial f-automorphisms),
but for which the raw graphs $g_1$ and $g_2$ have non-trivial automorphism groups.
By assumption, there exists a permutation $\sigma\neq \id$ such that 
\begin{equation*}
    \sigma(\mo(\bbg_2))=\mo(g_2,\by',\bz').
\end{equation*}
Then the consistency criterion in Definition~\ref{def:VN} requires 
\begin{equation*}
\sigma(\Phi\left(\bbg_1,\mo(\bbg_2),V^*\right))=\Phi\left(\bbg_1,\sigma(\mo(\bbg_2)),V^*\right),
\end{equation*}
while Equation~\eqref{eq:FA} requires
$   \Phi\left(\bbg_1,\sigma(\mo(\bbg_2)),V^*\right)=
\Phi\left(\bbg_1,\mo(\bbg_2),V^*\right),$
contradicting $\sigma \neq \id$.

For Equation~\eqref{eq:FA} to hold in general, we must consider a consistency criterion analogous to Equation~\eqref{eq:consis} that is compatible with schemes that ignore the vertex features.
The following definition, adapted from \cite{agterberg2019vertex}, suffices.
\begin{definition}[Feature-Oblivious VN Scheme]
\label{def:FOVN}
Let $n,m,d_1,e_1,d_2,e_2\in \Z_{>0}$ and $\cv$, $\ce$ be given.
Let $H$ be an obfuscating set of $V_1$ and $V_2$ of order $|V_2|=m$, and let $\mo\in\mathfrak{O}_H$ be given.
A \emph{feature-oblivious vertex nomination scheme} (FO-VN scheme) is a function 
\begin{equation*}
\Psi: \gnv^{\,d_1,e_1} \times \mo(\gmv^{\,d_2,e_2}) \times 2^{V_1} \rightarrow \calT_{H},
\end{equation*}
satisfying Equation~\eqref{eq:FA} as well as
the consistency criterion in Assumption~\ref{ass:FOVN} below.
\end{definition}

Similar to the FA-VN consistency criteria, we require FO-VN schemes to be similarly label-agnostic for the obfuscated labels of the second graph.
\begin{assu}[FO-VN Consistency Criteria]
\label{ass:FOVN}
With notation as in Definition~\ref{def:FOVN},
for each $\bbg_2=(g_2,\by,\bz)\in\gmv^{d_2,e_2}$ and $u\in V_{2}$, let
\begin{equation*}
\mathcal{J}(u;\bbg_2)=\{w\in V_2\text{ s.t. } 
	\exists\text{ an automorphism }\sigma\text{ of }g_2,\text{ s.t. }\,\sigma(u)=w\}.
\end{equation*}
For any ${\bf g_1}\in\gnv^{\,d_1,e_1},$ ${\bf g_2}\in\gmv^{\,d_2,e_2}$, $V^*\subset V_1\cap V_2$, obfuscating functions $\mo_1,\mo_2\in\mathfrak{O}_H$ and any $u\in V_2$, we require
\begin{equation} \label{eq:consis2}
\begin{aligned}
&\mathfrak{r}_{\Psi}({\bf g_1},{\bf g_2},\mo_1,V^*,\mathcal{J}(u;{\bf g_2}))=\mathfrak{r}_{\Psi}({\bf g_1},{\bf g_2},\mo_2,V^*,\mathcal{J}(u;{\bf g_2})) \\
 &~~~\Leftrightarrow \\
&\forall k\in[m]:  \\
&~~~\mo_2\circ\mo_1^{-1}\big( \mathcal{J}(\Psi({\bf g_1},\mo_1({\bf g_2}),V^*)[k]);\mo_1({\bf g_2})\big)=\mathcal{J}\left( \Psi({\bf g_1},\mo_2({\bf g_2}),V^*)[k];\mo_2({\bf g_2}) \right) 
\end{aligned}
\end{equation}
where $\Psi({\bf g_1},\mo({\bf g_2}),V^*)[k]$ denotes the $k$-th element
in the ordering $\Psi({\bf g_1},\mo({\bf g_2}),V^*)$
(i.e., the rank-$k$ vertex).
\end{assu}

The criterion in Equation~\eqref{eq:consis2} is less restrictive than that in Equation~\eqref{eq:consis}, 
and it is not immediate that incorporating features yields an FA-VN scheme with smaller loss than the Bayes optimal FO-VN scheme.
We illustrate this in the following example.
\begin{example}
Let $F\in\mathcal{F}^{(n,m)}$ be a distribution such that $G_1\stackrel{a.s.}{=}K_n$
and $G_2\stackrel{a.s.}{=}K_m$, where $K_n$ denotes the complete graph on $n$ vertices.
If the $f$-automorphism group of ${\bf G}_1$ ${\bf G}_2$ are a.s.\ trivial, then any given FA-VN scheme can be outperformed by a well-chosen FO-VN scheme.
Indeed, if there is a single vertex of interest $v^*$ in ${\bf G}_1$ with corresponding vertex $u^*$ in ${\bf G}_2$, then there exists a FO-VN scheme $\Psi$ that satisfies $\Psi(\bbg_1,\mo(\bbg_2),v^*)[1]=\mo(u^*)$ for almost all $\bbg_1,\mo(\bbg_2)$.
Such a $\Psi$ cannot satisfy Equation~\eqref{eq:consis}, and it is possible to have $L_1(\Phi^*)>0$ for FA-VN Bayes optimal $\Phi^*$.
\end{example}

\noindent However, consider distributions satisfying the following assumption.
\begin{assu} \label{ass:A2}
Let $({\bf G}_1=(G_1,\bX,\bW),{\bf G}_2=(G_2,\bY,\bZ))\sim F^{(n,m)}_{\Theta}\in\mathcal{F}^{(n,m)}$
and consider the events
\begin{equation*} \begin{aligned}
D_3&=\{\text{ the only automorphism of } G_1\text{ is }\sigma=\id_n  \}\\
D_4&=\{\text{ the only automorphism of }G_2\text{ is }\sigma=\id_m  \}.
\end{aligned} \end{equation*} 
$F^{(n,m)}_{\Theta}$
satisfies
$\p_{\FFt}( D_3)=\p_{\FFt}( D_4)=1$.
\end{assu}
Under Assumption \ref{ass:A2},
we have that $\mathcal{I}(u;\bbg_2)\stackrel{a.s.}{=}\mathcal{J}(u;\bbg_2)\stackrel{a.s.}{=}\{u\}$,
and the consistency criteria in Assumptions \ref{ass:FAVN} and \ref{ass:FOVN} are almost surely equivalent.
It is then immediate that Bayes optimality cannot be improved by ignoring features. That is, a FO-VN $\Psi$ scheme is almost surely a FA-VN scheme, and hence $L_k(\Psi,V^*)\geq L_k(\Phi^*,V^*)$ for all $k\in[m]$.
This leads us to ask whether we can establish conditions under which ignoring features \emph{strictly} decreases VN performance.

\subsection{Feature oblivious Bayes optimality}
We first establish the notion of a Bayes optimal FO-VN scheme for distributions satisfying Assumption \ref{ass:A2}.
Defining
\begin{align*}
\mathfrak{G}_{n,m} 
&=\{(g_1,\mo(g_2))\in \gn \times \mo(\gm)\text{ s.t. }g_1,g_2\text{ are asymmetric} \},
\end{align*}
let $\{(g_1,g_2)\}_{i=1}^p$ be such that
$\{(g_1,[\mo(g_2)])\}_{i=1}^p$
partitions
$\mathfrak{G}_{n,m}$.
For $\FFt$ supported on 
$\mathfrak{G}_{n,m}$,
it follows from \cite{agterberg2019vertex} that a Bayes optimal FO-VN scheme, $\Psi^*$,
can be constructed as follows.
If $(g_1^i,g_2^i)\in\{(g_1,g_2)\}_{i=1}^p$, and $(\bbg_1^i , \mo(\bbg_2^i))$ is {\em any} featured extension of $(g_1^i,g_2^i)$ (note that this notation will be implicit below), we sequentially define (breaking ties in a fixed but arbitrary manner and writing $\p(\cdot)$ for $\p_{\FFt}(\cdot)$ to ease notation)
\begin{equation*} 
\begin{aligned} 
\Psi^*(\bbg_1^i , \mo(\bbg_2^i), V^*)[1] 
&\in \argmax_{\substack{u\in H }}   \,\,
\p\bigg( (g_1^i, [\mo(g_2^i)])_{u\in \mo(V^*) }\,\, \big|\,\, (g_1^i, [\mo(g_2^i)]) \bigg) \\
\Psi^*(\bbg_1^i , \mo(\bbg_2^i), V^*)[2] 
&\in \argmax_{\substack{u\in H\setminus\{ \Psi^*[1]\}}}  
\p\bigg( (g_1^i, [\mo(g_2^i)])_{u\in \mo(V^*) }\,\, \big|\,\, (g_1^i, [\mo(g_2^i)]) \bigg) \\
&\vdots \\
\Psi^*(\bbg_1^i , \mo(\bbg_2^i), V^*)[m] 
&\in \argmax_{\substack{u\in H\setminus\{\cup_{j< m}\{\Psi^*[j]\}}}
\p\bigg( (g_1^i, [\mo(g_2^i)])_{u\in \mo(V^*) }\,\, \big|\,\, (g_1^i, [\mo(g_2^i)]) \bigg) ,
\end{aligned} \end{equation*}
where $(g_1, [\mo(g_2)])$ and $(g_1, [\mo(g_2)])_{u\in \mo(V^*)}$ (that is, the graphs without
their features) are defined analogously to the featured $(\bbg_1, [\mo(\bbg_2)])$ and 
$(\bbg_1, [\mo(\bbg_2)])_{u\in \mo(V^*)}$ respectively:
\begin{align*}
({ g_1}, [\mo(g_2)]) &= \bigg\{ (g_1, \widehat g_2) \in \mathfrak{G}_{n,m}\text{ s.t. } \mo(\widehat g_2) \simeq \mo(g_2) \bigg\} \\
({ g_1}, [\mo(g_2)])_{u = \mo(v)} 
&=\bigg\{ (g_1, \widehat g_2) \in \mathfrak{G}_{n,m}\text{ s.t. } \mo(\widehat g_2) = \sigma( \mo(g_2)), \text{ where $\sigma$ is an}\\
& \hspace{15mm}\text{ isomorphism  satisfying }\sigma(u)=\mo(v) \bigg\}. 
\end{align*}
For each
$(\bbg_1',\bbg_2')\in(\bbg_1^i, [\mo(\bbg_2^i)]),$ choose the f-isomorphism $\sigma$ such that $\mo(\bbg_2')=\sigma(\mo(\bbg_2^i))$, and define
$${\Psi^*}(\bbg_1',\mo(\bbg_2'),V^*)=\sigma({\Psi^*}(\bbg_1^i,\mo(\bbg_2^i),V^*)).$$
For elements $(\bbg_1, \mo(\bbg_2))\notin\mathfrak{G}_{n,m}$, any fixed and arbitrary definition of $\Phi_O^*$ satisfying Equation~\eqref{eq:consis2} suffices (as this set has measure $0$ under $\FFt$ by Assumption \ref{ass:A2}).
Note that $\Psi^*$ is almost surely well-defined, as the definition of $\Psi^*$ on 
\begin{equation*} \begin{aligned}
\label{eq:assym}
\mathfrak{A}_{n,m}=\AS
:=\{&(\bbg_1=(g_1,\bx,\bw),\bbg_2=(g_2,\by,\bz))\in \gnv^{\,d_1,e_1} \times \gmv^{\,d_2,e_2} \\
&~~~~~~\text{ s.t. }g_1,g_2\text{ are asymmetric} \}
\end{aligned} \end{equation*}
 is independent of the choice of the partition $\{(g_1^i,g_2^i) : i =1,2,\dots,p \}$.

\subsection{The Benefit of Features}
\label{sec:features}

With the FO-VN scheme defined, we seek to understand when, for distributions $F^{(n,m)}_\Theta$ supported on $\mathfrak{A}_{n,m}$, we have $L_k(\Phi^*,V^*)<L_k(\Psi^*,V^*)$ where $\Phi^*$ and $\Psi^*$
are the Bayes optimal FA-VN and FO-VN schemes, respectively, under $F^{(n,m)}_\Theta$.
Toward this end, we first define the following $\mathcal{T}_H$-valued random variable.
\begin{definition} \label{def:XPhi}
Let $F^{(n,m)}_\Theta\in\mathcal{F}^{(n,m)}$, let $V^* \subseteq V_1\cap V_2$ be a given set of vertices of interest, and let $H$ be an obfuscating set of $V_1$ and $V_2$ of order $|V_2|=m$ with obfuscating function $\mo\in\mathfrak{O}_H$.
Let $\Phi$ be a VN scheme (either feature-aware or feature-oblivious), and define, letting $\Omega$ be our sample space, the $\mathcal{T}_H$-valued random variable 
\begin{equation*}
    X_{\Phi}:\Omega\mapsto \mathcal{T}_H
\end{equation*}
by
$X_\Phi(\omega)=\Phi({\bf G}_1(\omega),\mo({\bf G}_2(\omega)),V^*)$.
For each $k\leq m$, define 
$X_{\Phi}^k=X_\Phi[1:k]\in\mathcal{T}_H^k$, where we define $\mathcal{T}_H^k$ to be the set of all $k$-tuples of distinct elements of $H$ (each such tuple can be viewed as specifying a total ordering of $k$ distinct elements of $H$).
\end{definition}
\begin{remark} \label{rem:measurability}
Note that in the setting of continuous features, the measurability of $X_\Phi$ is not immediate (and indeed, is non-trivial to establish); this technical hurdle is the main impetus for discretizing the feature space.
\end{remark}
We can now characterize the conditions under which incorporating features strictly improves VN performance.  A proof of Theorem~\ref{thm:feat} can be found in Appendix~\ref{sec:featpf}.
\begin{theorem} \label{thm:feat}
Consider the setup and notation of Definition~\ref{def:XPhi}
and suppose that $F^{(n,m)}_\Theta\in\mathcal{F}^{(n,m)}$ satisfies Assumption~\ref{ass:A2}.
Letting $\Phi^*$ and $\Psi^*$ be Bayes optimal FA-VN and FO-VN schemes, respectively,
under $F^{(n,m)}_\Theta$, we have that
$L_k(\Phi^*,V^*)=L_k(\Psi^*,V^*)$ if and only if there exists a Bayes optimal FA-VN scheme $\Phi^*$ with 
$$\mathbb{I}(X_{\Phi^*}^k; (G_1,G_2) )=\mathbb{H}(X_{\Phi^*}^k),$$
where $\mathbb{I}$ is the mutual information and $\mathbb{H}$ the statistical entropy defined by
\begin{align*}
&\mathbb{H}(X_{\Phi^*}^k)=-\sum_{\xi\in \mathcal{T}_H^k}\p(X_{\Phi^*}^k=\xi)\log(\p(X_{\Phi^*}^k=\xi))\\
&\mathbb{I}(X_{\Phi^*}^k; (G_1,G_2) )=
\sum_{\xi\in \mathcal{T}_H^k}\sum_{\substack{(g_1,g_2)\\ \in\gn\times\gm}}\p(\xi, (g_1,g_2))\log\left(\frac{\p(\xi, (g_1,g_2))}{\p(\xi)\p((g_1,g_2))}\right),
\end{align*}
where we have written $\p(\xi, (g_1,g_2))$ as shorthand for
$\p(X_{\Phi^*}^k=\xi, (G_1,G_2)=(g_1,g_2))$.
\end{theorem}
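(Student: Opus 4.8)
The plan is to translate the statement $L_k(\Phi^*,V^*)=L_k(\Psi^*,V^*)$ into a statement about conditional probabilities, and then recognize the resulting equality as an information-theoretic identity. First I would recall from the constructions in \eqref{eq:BOFA} and \eqref{eq:BO:FOVN} that, on the asymmetric set $\mathfrak{A}_{n,m}$ (which carries full mass under Assumption~\ref{ass:A2}), both Bayes-optimal schemes rank vertices by descending conditional probability of being a vertex of interest: $\Phi^*$ conditions on the featured class $(\bbg_1,[\mo(\bbg_2)])$, while $\Psi^*$ conditions only on the topological class $(g_1,[\mo(g_2)])$. Since $L_k(\Psi^*,V^*)\ge L_k(\Phi^*,V^*)$ always holds here (as noted in the text, an FO-VN scheme is a.s.\ an FA-VN scheme under Assumption~\ref{ass:A2}), equality holds if and only if $\Psi^*$ is \emph{itself} level-$k$ Bayes optimal among FA-VN schemes, i.e.\ the topology-only ranking of the top $k$ already attains the featured Bayes error. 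I would make this rigorous using the characterization of Bayes optimality via the sum of the $k$ largest conditional ``interestingness'' probabilities (mirroring the proof of Theorem~\ref{thm:BO}).

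Next, the key step is to rephrase ``$\Psi^*$ attains the featured Bayes error'' as ``the optimal top-$k$ list $X_{\Phi^*}^k$ is a deterministic function of the topology $(G_1,G_2)$.'' In one direction: if $X_{\Phi^*}^k$ is $(G_1,G_2)$-measurable, then the ordering produced by the featured scheme depends on the data only through the graphs, so the feature-oblivious scheme can reproduce it, giving $L_k(\Psi^*)=L_k(\Phi^*)$. Conversely, if $L_k(\Psi^*)=L_k(\Phi^*)$, I would argue that on $\p$-almost every topological class $(g_1,[\mo(g_2)])$ the featured optimal list must coincide (up to the fixed tie-breaking rule, and using Assumption~\ref{ass:A1} to rule out genuine ranking ties) with the feature-oblivious optimal list, hence $X_{\Phi^*}^k$ is a.s.\ determined by $(G_1,G_2)$. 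This is where the freedom ``there exists a Bayes optimal FA-VN scheme $\Phi^*$ with\dots'' in the theorem statement is used: different tie-breaking choices give different $X_{\Phi^*}^k$, and we only need one of them to be $(G_1,G_2)$-measurable.

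Finally, I would invoke the elementary fact that for a discrete random variable $X$ and another discrete random variable $Z$, one has $\bH(X\mid Z)=\bH(X)-\bI(X;Z)\ge 0$, with equality $\bI(X;Z)=\bH(X)$ iff $\bH(X\mid Z)=0$ iff $X$ is a.s.\ a deterministic function of $Z$. Applying this with $X=X_{\Phi^*}^k$ and $Z=(G_1,G_2)$ (both discrete, since $\mathcal{T}_H^k$ is finite and $\gn\times\gm$ is countable) converts the measurability statement from the previous paragraph into exactly $\bI(X_{\Phi^*}^k;(G_1,G_2))=\bH(X_{\Phi^*}^k)$, completing the equivalence. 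I expect the main obstacle to be the converse direction of the middle step: carefully showing that equality of level-$k$ errors forces the \emph{entire} top-$k$ ordered list (not merely the unordered top-$k$ set, and at every $k'\le k$) to be topology-measurable, while correctly bookkeeping the arbitrary-but-fixed tie-breaking and the measure-zero symmetric configurations excluded by Assumptions~\ref{ass:A1} and~\ref{ass:A2}. Everything else is either a direct appeal to the Bayes-optimality characterization underlying Theorem~\ref{thm:BO} or the standard entropy identity.
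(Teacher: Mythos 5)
Your proposal is correct and follows essentially the same route as the paper: the paper likewise reduces the equivalence to the statement that $X_{\Phi^*}^k$ is a.s.\ determined by $(G_1,G_2)$, proves the backward direction by explicitly building an FO-VN scheme whose top-$k$ list reproduces the deterministic $\xi_{g_1,g_2}$, proves the forward direction via the $R_k$-decomposition from the proof of Theorem~\ref{thm:BO} (forcing $R_k(\Psi^*)=R_k(\Phi^*)$ classwise and then choosing a tie-breaking so the two top-$k$ lists agree a.s.), and finishes with the identity $\mathbb{I}(X;Z)=\mathbb{H}(X)\iff\mathbb{H}(X\mid Z)=0$. The tie-breaking subtlety you flag as the main obstacle is exactly the point the paper handles with the ``there exists a tie-breaking scheme'' step.
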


\noindent We note that, since $\mathbb{I}(X_{\Phi^*}^k; (G_1,G_2) )\leq \mathbb{H}(X_{\Phi^*}^k)$,
we can restate the result of
Theorem \ref{thm:feat} as
$L_k(\Phi^*,V^*)<L_k(\Psi^*,V^*)$ if and only if for all Bayes optimal FA-VN schemes $\Phi^*$,  
$$\mathbb{I}(X_{\Phi^*}^k; (G_1,G_2) )<\mathbb{H}(X_{\Phi^*}^k).$$
Stated succinctly, there is excess uncertainty in $X_{\Phi^*}^k$ after observing $(G_1,G_2)$; and $X_{\Phi^*}^k$ is not deterministic given $(G_1,G_2)$.
 

\section{Network-Oblivious Vertex Nomination}
\label{sec:nogs}
In contrast to the feature-oblivious VN schemes considered in Section \ref{sec:nofeat}, one can also consider VN schemes that use only features and ignore network structure.
Defining such a network-oblivious VN scheme (NO-VN scheme) is not immediately straightforward.
Ideally, we would like to have that
for all $(g_1,g_2),(g_1',g_2')\in\gn\times\gm$ and all
edge features $(\bw,\bz)$, $(\bw',\bz')$ compatible with $(g_1,g_2)$ and $(g_1',g_2')$ respectively,
\begin{equation}
\label{eq:NOVN}
\Phi\left((g_1,\bx,\bw),\mo(g_2,\by,\bz),V^*\right)
=
\Phi\left((g_1',\bx,\bw'),\mo(g_2',\by,\bz'),V^*\right)
\end{equation}
for any choice of vertex features $\bx,\by$.
As in the FO-VN scheme setting, this leads to potential violation of the internal consistency criteria of Equation~\eqref{eq:consis}.
Indeed, consider $\bbg_1=(g_1,\bx,\bw)$ and $\bbg_2=(g_2,\by,\bz)$ with asymmetric graphics
but with symmetries in $\by$ (i.e., there exists non-identity permutation matrix $P_\sigma$ such that $P_\sigma\by=\by$). On such networks, Equations~\eqref{eq:consis} and~\eqref{eq:NOVN} cannot both hold simultaneously.
Thus, we consider a relaxed consistency criterion as in Assumption \ref{ass:FOVN}. We first define
\begin{equation*}
\mathcal{Y}(u;\bbg_2)=\{w\in V(g_2) : 
	\exists\text{ bijection }\sigma
	\text{  s.t. }P_{\sigma}\by=\by\text{ and }\,\sigma(u)=w\},
\end{equation*}
and make the following consistency assumption.
\begin{assu}[NO-VN Consistency Criteria]
\label{ass:NOVN}
For any ${\bf g_1}\in\gnv^{\,d_1,e_1},$ ${\bf g_2}\in\gmv^{\,d_2,e_2}$,
letting $H$ be an obfuscating set of $V_1$ and $V_2$ of order $|V_2|=m$
with $\mo_1,\mo_2\in\mathfrak{O}_H$,
$V^*\subset V_1\cap V_2$ be the set of vertices of interest, and taking $u \in V(g_2)$, if $\Xi$ is a VN scheme satisfying this assumption, then 
\begin{align}
\label{eq:consis3}
&\mathfrak{r}_{\Xi}({\bf g_1},{\bf g_2},\mo_1,V^*,\mathcal{Y}(u;{\bf g_2}))=\mathfrak{r}_{\Xi}({\bf g_1},{\bf g_2},\mo_2,V^*,\mathcal{Y}(u;{\bf g_2}))\\
\notag &\Leftrightarrow \mo_2\circ\mo_1^{-1}\big( \mathcal{Y}(\Xi({\bf g_1},\mo_1({\bf g_2}),V^*)[k]);\mo_1({\bf g_2})\big)=\mathcal{Y}\left( \Xi({\bf g_1},\mo_2({\bf g_2}),V^*)[k];\mo_2({\bf g_2}) \right)\\
& \hspace{25mm}\text{ for all }k\in[m],\notag
\end{align}
where $\Xi({\bf g_1},\mo({\bf g_2}),V^*)[k]$ denotes the $k$-th element in the ordering $\Xi({\bf g_1},\mo({\bf g_2}),V^*)$ (i.e., the rank-$k$ vertex under $\Phi$).
\end{assu}
A network-oblivious VN scheme $\Xi$ is then a VN scheme as in Definition~\ref{def:VN}, where the consistency criterion of Equation~\eqref{eq:consis} is replaced with that in Equation~\eqref{eq:consis3} and we further require Equation~\eqref{eq:NOVN} to hold.
As with FA-VN schemes, we consider distributions satisfying the following assumption.
\begin{assu}
\label{ass:A3}
Let $( (G_1,\bX,\bW),(G_2,\bY,\bZ))\sim F^{(n,m)}_{\Theta}\in\mathcal{F}^{(n,m)}$
and define the events
$D_5=\{ \bX=P_\sigma\bX \implies \sigma = \id \}$
and
$D_6=\{ \bY=P_\sigma\bY \implies \sigma = \id \}$.
$F^{(n,m)}_{\Theta}$ is such that
$\p_{\FFt}( D_5)=\p_{\FFt}( D_6)=1$.
\end{assu}
We note the parallel between this assumption and Assumption~\ref{ass:A2}, while noting that the two assumptions concern permutations acting on markedly different objects (graphs, in the case of Assumption~\ref{ass:A2} and vertex-level features in the case of Assumption~\ref{ass:A3}).
Under Assumption~\ref{ass:A3}, we have that $\mathcal{I}(u;\bbg_2)\stackrel{a.s.}{=}\mathcal{Y}(u;\bbg_2)\stackrel{a.s.}{=}\{u\}$,
and the consistency criteria of Assumption~\ref{ass:NOVN} are almost surely equivalent.
As in Section \ref{sec:nofeat}, under this assumption, we have that Bayes optimality cannot be improved by ignoring the network. Indeed, one can show that the NO-VN scheme is almost surely a FA-VN scheme, and we are led once again to ask under what circumstances VN performance will be strictly worsened by ignoring the network (and subsequently, the edge features).
To this end, we wish to compare Bayes optimality of NO-VN with that of FA-VN.

\subsection{Network-oblivious Bayes optimality}
\label{sec:NOBO}
We first establish the notion of a Bayes optimal NO-VN scheme for distributions satisfying Assumption \ref{ass:A3}. Define
\begin{align*}
\mathfrak{F}_{n,m}
:=\{(\bx,\by)\in \mathbb{R}^{n\times d_1} \times \mathbb{R}^{m\times d_2}\text{ s.t. }\bx,\by\text{ have distinct rows} \},
\end{align*}
and for $(\bx,\by)\in\mathfrak{F}_{n,m}$, define
\begin{align*}
(\bx, [\mo(\by)]) &= \bigg\{ (\bx, \widehat \by) \in \mathfrak{F}_{n,m}\text{ s.t. there exists permutation }\sigma\text{ s.t. }P_\sigma \by=\widehat\by \bigg\} \\
(\bx, [\mo(\by)])_{u = \mo(v)} 
&=\bigg\{ (\bx, \widehat \by) \in \mathfrak{F}_{n,m}\text{ s.t. there exists permutation }\sigma\text{ s.t. }P_\sigma \by=\widehat\by\\
&\hspace{20mm}\text{ and $\sigma$ satisfies }\sigma(u)=\mo(v)\bigg\}.
\end{align*}
For a given $\FFt$ satisfying Assumption \ref{ass:A3}, we will define the Bayes optimal NO-VN scheme, $\Xi^*$, element-wise on vertex feature matrices with no row repetitions (similar to in Section \ref{sec:BO}),
and then lift the scheme to all richly-featured graphs with vertex features not in $\mathfrak{F}_{n,m}$.
For $(\bx,\bw,\by,\bz)$ with $\bx$ and $\by$ having distinct rows, let $g_1$ and $g_2$ be the unique graphs with edge structure compatible with $\bw$ and $\bz$ respectively.
Writing $\bbg_1=(g_1,\bx,\bw)$ and $\bbg_2=(g_2,\by,\bz)$, we define,
writing $\p(\cdot)$ for $\p_{\FFt}(\cdot)$ to ease notation 
\begin{equation*} 
\begin{aligned} 
\Xi^*(\bbg_1 , \mo(\bbg_2), V^*)[1] 
&\in \argmax_{\substack{u\in H }}   \,\,
\p\bigg( (\bx, [\mo(\by)])_{u\in \mo(V^*) }\,\, \big|\,\, (\bx, [\mo(\by)]) \bigg) \\
\Xi^*(\bbg_1 , \mo(\bbg_2), V^*)[2] 
&\in \argmax_{\substack{u\in H\setminus\{ \Xi^*[1]\}}}  
\p\bigg( (\bx, [\mo(\by)])_{u\in \mo(V^*) }\,\, \big|\,\, (\bx, [\mo(\by)]) \bigg) \\
&\vdots \\
\Xi^*(\bbg_1 , \mo(\bbg_2), V^*)[m] 
&\in \argmax_{\substack{u\in H\setminus\{\cup_{j< m}\{\Xi^*[j]\}}}
\p\bigg( (\bx, [\mo(\by)])_{u\in \mo(V^*) }\,\, \big|\,\, (\bx, [\mo(\by)]) \bigg) ,
\end{aligned} \end{equation*}
where we write 
$(\bx, [\mo(\by)])$ in the conditioning statement as shorthand for
(writing $\bbg_1=(g_1,\bx,\bw)$ and $\bbg_2=(g_2,\by,\bz)$)
\begin{equation*}
    \left({\bf G}_1,\mo({\bf G}_2) \right) \in\left\{((g_1,\bx,\bw),\mo((g_2,\by,\bz)))\text{ s.t. }(\bx,\by)\in(\bx, [\mo(\by)])\right\}.
\end{equation*}
Note that once again, ties in the maximizations when constructing $\Xi^*$ are assumed to be broken in an arbitrary but nonrandom manner.
For each element 
\begin{equation*}
    (\bbg_1',\bbg_2')\in(\bbg_1, [\mo(\bbg_2)]),
\end{equation*}
choose the f-isomorphism $\sigma$ such that $\mo(\bbg_2')=\sigma(\mo(\bbg_2))$, and define
\begin{equation*}
{\Xi^*}(\bbg_1',\mo(\bbg_2'),V^*)=\sigma({\Xi^*}(\bbg_1,\mo(\bbg_2),V^*)).
\end{equation*}
For elements $(\bx,\by)\notin\mathfrak{F}_{n,m}$ and arbitrary edge features $\bw,\bz$, any fixed and arbitrary definition of $\Xi^*$ on (well-defined) graphs in $\gn\times\{\bx\}\times\{\bw\}\times\gm\times\{\by\}\times\{\bz\}$ suffices, subject to the internal consistency criterion in Equation~\eqref{eq:consis3}, as this set has measure $0$ under $\FFt$ under Assumption \ref{ass:A3}.

\subsection{The benefit of network topology}
\label{sec:ben_net}
Once again, for distributions satisfying Assumption \ref{ass:A3},
our aim is to under stand when $L_k(\Phi^*,V^*)<L_k(\Xi^*,V^*)$.
That is, when does incorporating the network topology into the vertex-level features strictly improve VN performance? Theorem~\ref{thm:topology} characterizes these conditions.
The proof is completely analogous to the proof of Theorem~\ref{sec:vnfeat}, and is included in Appendix \ref{sec:top_proof} for completeness.

\begin{theorem}
\label{thm:topology}
Let $F^{(n,m)}_\Theta\in\mathcal{F}^{(n,m)}$ be a richly-featured nominatable distribution satisfying Assumption \ref{ass:A3}.
Let $V^* \subseteq V_1\cap V_2$ be a given set of vertices of interest and let $H$ be an obfuscating set of $V_1$ and $V_2$ of order $|V_2|=m$ with $\mo\in\mathfrak{O}_H$.
Let $\Phi^*$ and $\Xi^*$ be Bayes optimal FA-VN and NO-VN schemes, respectively, under $F^{(n,m)}_\Theta$. Then
$L_k(\Phi^*,V^*)=L_k(\Xi^{*},V^*)$ if and only if there exists a Bayes optimal FA-VN scheme $\Phi^*$ with 
\begin{equation*}
    \mathbb{I}(X_{\Phi^*}^k; (\bX,\bY) )=\mathbb{H}(X_{\Phi^*}^k).
\end{equation*}
\end{theorem}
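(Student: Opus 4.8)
The plan is to run, almost verbatim, the argument used for Theorem~\ref{thm:feat}, with the statistic $(G_1,G_2)$ visible to a feature-oblivious scheme replaced throughout by the statistic $(\bX,\bY)$ visible to a network-oblivious scheme, with FO-VN replaced by NO-VN, $\Psi^*$ by $\Xi^*$, and the graph-class partition used to build $\Psi^*$ replaced by the vertex-feature-matrix partition of $\mathfrak{F}_{n,m}$ used to build $\Xi^*$ in Section~\ref{sec:NOBO}. First I would note that under Assumption~\ref{ass:A3} we have $\mathcal{I}(u;\bbg_2)\stackrel{a.s.}{=}\mathcal{Y}(u;\bbg_2)\stackrel{a.s.}{=}\{u\}$, so (up to modification on a $\p_{\FFt}$-null set) every NO-VN scheme is a FA-VN scheme and hence $L_k(\Xi^*,V^*)\ge L_k(\Phi^*,V^*)$; the theorem is therefore a characterization of when equality holds. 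I would also record the elementary fact that
\[
L_k(\Phi,V^*)=1-\frac1k\,\e\big[\#\{\,j\le k:\Phi({\bf G}_1,\mo({\bf G}_2),V^*)[j]\in\mo(V^*)\,\}\big]
\]
depends on a scheme $\Phi$ only through the truncated ranking $X_\Phi^k$, so two schemes with a.s.-equal top-$k$ behaviour have equal level-$k$ error.

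For the ``if'' direction, suppose some level-$k$ Bayes-optimal FA-VN scheme $\Phi^*$ satisfies $\mathbb{I}(X_{\Phi^*}^k;(\bX,\bY))=\mathbb{H}(X_{\Phi^*}^k)$. Since $X_{\Phi^*}^k$ is valued in the finite set $\mathcal{T}_H^k$ and $(\bX,\bY)$ is valued in a countable set (here is where the discreteness of Definition~\ref{def:richfeat} is used; cf.\ Remark~\ref{rem:measurability}), the identity $\mathbb{I}=\mathbb{H}$ is equivalent to $\mathbb{H}(X_{\Phi^*}^k\mid\bX,\bY)=0$, i.e.\ there is a deterministic $h$ with $X_{\Phi^*}^k=h(\bX,\bY)$ a.s. I would then build a NO-VN scheme $\Xi$ by specifying its first $k$ coordinates on each class $(\bx,[\mo(\by)])$ according to $h$ (lifted across the class by the f-isomorphism bookkeeping of Section~\ref{sec:NOBO}) and its remaining $m-k$ coordinates in a fixed, arbitrary way consistent with \eqref{eq:consis3}. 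One checks that $\Xi$ depends on the observed pair only through the vertex features $(\bx,\mo(\by))$, hence obeys \eqref{eq:NOVN}, and satisfies Assumption~\ref{ass:NOVN}; thus $\Xi$ is a genuine NO-VN scheme whose top-$k$ agrees a.s.\ with that of the Bayes-optimal $\Phi^*$, so $L_k(\Xi,V^*)=L_k(\Phi^*,V^*)$ and therefore $L_k(\Xi^*,V^*)=L_k(\Phi^*,V^*)$.

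For the ``only if'' direction, suppose $L_k(\Xi^*,V^*)=L_k(\Phi^*,V^*)$. By the first paragraph $\Xi^*$ is (a.s.) a FA-VN scheme, and it now attains the level-$k$ FA-VN Bayes error, so $\Xi^*$ is itself a level-$k$ Bayes-optimal FA-VN scheme. But by its construction in Section~\ref{sec:NOBO} the first $k$ entries of $\Xi^*$ on a class $(\bx,[\mo(\by)])$ depend only on $(\bx,[\mo(\by)])$, hence on $(\bX,\bY)$; that is, $X_{\Xi^*}^k$ is a deterministic function of $(\bX,\bY)$, so $\mathbb{H}(X_{\Xi^*}^k\mid\bX,\bY)=0$ and, by the same countability remark as above, $\mathbb{I}(X_{\Xi^*}^k;(\bX,\bY))=\mathbb{H}(X_{\Xi^*}^k)$. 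Taking $\Phi^*:=\Xi^*$ exhibits the required Bayes-optimal FA-VN scheme, completing the equivalence; the restatement with a universal quantifier and strict inequality then follows from $\mathbb{I}\le\mathbb{H}$ in general.

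I expect the one genuinely delicate point — the reason this is not literally a word substitution into the proof of Theorem~\ref{thm:feat} — to be the verification in the ``if'' direction that the scheme $\Xi$ built from the merely a.s.-defined function $h$ is a \emph{bona fide} NO-VN scheme: one must confirm that the ``lift across $[\mo(\by)]$-classes'' recipe used to define $\Xi^*$ in Section~\ref{sec:NOBO} is compatible with $h$ and simultaneously produces something satisfying both \eqref{eq:NOVN} and \eqref{eq:consis3}, and it is exactly here that Assumption~\ref{ass:A3} (no nontrivial row-symmetry of $\bY$) is needed to make the lift well-defined. A secondary bookkeeping subtlety is tie-breaking in the $\argmax$'s defining the Bayes-optimal schemes: one only needs $X_{\Phi^*}^k$ to be a.s.\ a function of $(\bX,\bY)$, so ties (which lie in a set of measure zero where they would obstruct this, under Assumption~\ref{ass:A3}) can be resolved to realize the existential quantifier ``there exists a Bayes-optimal $\Phi^*$'' in the statement. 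Everything else transcribes line-by-line from the proof of Theorem~\ref{thm:feat}.
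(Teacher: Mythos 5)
Your proposal is correct. The ``if'' direction is essentially identical to the paper's: both convert $\mathbb{I}(X_{\Phi^*}^k;(\bX,\bY))=\mathbb{H}(X_{\Phi^*}^k)$ into $\bH(X_{\Phi^*}^k\mid(\bX,\bY))=0$, read off the a.s.-unique top-$k$ list $\xi_{\bx,\by}$ as a function of the vertex features alone, pad it to a legitimate NO-VN scheme $\Xi$ whose top $k$ agrees a.s.\ with that of $\Phi^*$, and conclude $L_k(\Phi^*,V^*)=L_k(\Xi,V^*)\ge L_k(\Xi^*,V^*)\ge L_k(\Phi^*,V^*)$. Where you genuinely diverge is the ``only if'' direction. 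The paper expands both losses over the partition into asymmetric classes, uses the termwise inequality $R_k(\Xi^*,\cdot)\le R_k(\Phi^*,\cdot)$ to force equality on every class of positive mass, and then re-chooses the tie-breaking in the canonical construction of $\Phi^*$ so that its top $k$ coincides a.s.\ with that of $\Xi^*$, only then concluding $\bH(X_{\Phi^*}^k\mid(\bX,\bY))=\bH(X_{\Xi^*}^k\mid(\bX,\bY))=0$. You shortcut this by observing that $\Xi^*$, being a.s.\ an FA-VN scheme under Assumption \ref{ass:A3} and attaining the FA-VN Bayes error by hypothesis, is itself the required Bayes-optimal witness, and $X_{\Xi^*}^k$ is a function of $(\bX,\bY)$ by construction. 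That is cleaner and avoids the $R_k$ bookkeeping entirely; the one thing it leans on is that $\Xi^*$, after modification on a $\p_{\FFt}$-null set where the FA-VN consistency criterion could fail, is literally a member of $\nnm$ --- the same ``a.s.\ equivalence of consistency criteria'' fact the paper asserts in Section \ref{sec:nogs}, and which you flag explicitly. The paper's tie-breaking route instead produces a witness inside the canonical family of Bayes-optimal FA-VN schemes and so sidesteps that modification, at the cost of the extra decomposition argument. Both routes are sound, and your identification of where Assumption \ref{ass:A3} and discreteness of the feature space enter is accurate.
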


\noindent Note that, since $\mathbb{I}(X_{\Phi^*}^k; (\bX,\bY) )\leq \mathbb{H}(X_{\Phi^*}^k),$ 
Theorem~\ref{thm:topology} can be restated as
$L_k(\Phi^*,V^*)<L_k(\Xi^*,V^*)$ (i.e., incorporating network structure improves performance) if and only if all Bayes optimal FA-VN schemes $\Phi^*$ satisfy
$\mathbb{I}(X_{\Phi^*}^k; (\bX,\bY) )<\mathbb{H}(X_{\Phi^*}^k)$.
Said yet another way, incorporating network structure improves VN performance if and only if there is excess uncertainty in $X_{\Phi^*}^k$ conditional on the features $(\bX,\bY)$.
This is precisely when the network structure is informative--- the FA-VN scheme $\Phi^*$ incorporates both network and feature information into its ranking, while the NO-VN scheme incorporates only the feature information carried by $(\bX,\bY)$.

\section{Simulations and Experiments}
\label{sec:exp}

We turn now to a brief experimental exploration of the VN problem as applied to both simulated and real data.
We consider a VN scheme based on spectral clustering, which we denote $\vnase$.
We refer the reader to \cite{agterberg2019vertex} for details and a further exploration of this scheme in an adversarial version of vertex nomination without node or edge features. 

In our experiments, edge features will appear as edge weights or edge directions, while vertex features will take the form of feature matrices $\bx$ and $\by$,
following the notation of previous sections.
The scheme $\vnase$ proceeds as follows.
Note that we have assumed $n=m$ for simplicity, but the procedure can be extended to pairs of differently-sized networks in a straight-forward manner.
\begin{itemize}
    \item[i.] Pass the edge weights to ranks, and augment the diagonal of the adjacency matrix by setting $A_{i,i} =\sum_{j\neq i} A_{i,j}/(n-1)$ \cite{tang2018connectome}; see Appendix \ref{sec:ptrda} for detail.
    \item[ii.] Embed the two networks into a common Euclidean space, $\mathbb{R}^d$
        using Adjacency Spectral Embedding \cite{sussman2014consistent}; see Appendix \ref{sec:ASE} for details.

    The embedding dimension $d$ is chosen by estimating the elbow in the scree plots of the adjacency matrices of the networks $G_1$ and $G_2$ \cite{zhu2006automatic}, taking $d$ to be the larger of the two elbows.

    Applying ASE to an $n$-vertex graph results in a mapping of the $n$ vertices in the graph
    to points in $\mathbb{R}^d$. We denote the embeddings of graphs $G_1$ and $G_2$ by $\widehat{X}_1,\widehat{X}_2 \in \mathbb{R}^{n \times d}$, respectively, with the $i$-th row of each of these matrices corresponding to the embedding of the $i$-th vertex in its corresponding network.
    
    \item[iii.] Given seed vertices $S$ (see Appendix \ref{sec:seeds}) whose correspondence is known a priori across networks, solve the orthogonal Procrustes problem \cite{GowDij2004} (see Appendix \ref{sec:proc})
to align the rows of $\widehat{X}_1[S,:]$ and $\widehat{X}_2[S,:]$. Apply this Procrustes rotation to the rows of $\widehat{X}_2$, yielding $\widehat{Y}_2 \in \mathbb{R}^{n \times d}$.
    If $p$-dimensional vertex features are available, append the vertex features to the embeddings as $Z_1=[\widehat X_1\,|\,\bx] \in \Rb^{n \times (d+p)}$ and
    $Z_2=[\widehat Y_2\,|\,\by] \in \Rb^{n \times (d+p)}$.
    
    \item[iv.] Cluster the rows of both $Z_1$ and $Z_2$ using a Gaussian mixture modeling-based clustering procedure; see, for example, the \texttt{mClust} package in \texttt{R}; \cite{mclust}.

    For each vertex $v$, let $\mu_v$ and $\Sigma_v$ be the mean and covariance of the normal mixture component containing $v$.
    For each $u\in V(G_2)$, compute the distances 
    \begin{equation*}
        D(V^*,u)=\min_{v^*\in V^*}\max\left\{
    \sqrt{(v^*-u)\Sigma_u^{-1}(v^*-u)},
    \sqrt{(v^*-u)\Sigma_{v^*}^{-1}(v^*-u)}
    \right\}.
     \end{equation*}
    
    \item[v.] Rank the unseeded vertices in $G_2$ so that the vertex $u$ minimizing $D(V^*,u)$ is ranked first, with ties broken in an arbitrary but fixed manner.
\end{itemize}
Below, we apply this VN scheme in an illustrative simulation and in two real data network settings derived from neuroscience and text-mining applications.

\subsection{Synthetic data}
\label{sec:sim}
To further explore the complementary roles of network structure and features in vertex nomination, we consider the following simulation, set in the context of the stochastic blockmodel \cite{Holland1983}, as described in Definition~\ref{def:SBM}.
We consider 
$G_1\sim \text{SBM}(5,b,\Lambda_1)$
independent of
$G_2\sim \text{SBM}(5,b,\Lambda_2),$
with $V(G_i)=\{1,2,\ldots,250\}$,
$b(v)=\lceil 250/v\rceil$, 
\begin{equation*}
    \Lambda_1=
\operatorname{diag}(\epsilon+0.05,\epsilon,\epsilon,\epsilon,\epsilon)+0.3*J_5,
\end{equation*}
and $\Lambda_2=0.8*\Lambda_1+0.2*J_5$, where $J_p$ denotes the $p$-by-$p$ matrix of all ones.
We designate block 1 as the {\em anomalous block},
containing the vertices of interest across the two networks,
with the signal in the anomalous block 1 dampened in $G_2$ compared to $G_1$
owing to the convex combination of $\Lambda_1$ and the ``flat'' matrix $J_5$.
We will consider the vertices of interest to be all $v\in V$ such that $b(v)=1$.
We select 10 vertices at random from block 1 in $G_1$ and from block 1 in $G_2$ to serve as ``seeded'' vertices, meaning vertices whose correspondences are known ahead of time.

We consider vertex features $\bx,\by\in\mathbb{R}^{250\times 5}$ of the form (letting $I_d$ denote the $d$-by-$d$ identity matrix)
\begin{equation*}
    \bx(v) \sim
\begin{cases}
\operatorname{Normal} (\delta\vec 1,I_5)\text{ if } b(v)=1\\
\operatorname{Normal} (\vec 0,I_5)\text{ if } b(v)\neq 1
\end{cases}
\hspace{2mm}
\by(v) \sim
\begin{cases}
\operatorname{Normal} (\delta\vec 1,I_5)\text{ if } b(v)=1\\
\operatorname{Normal} (\vec 0,I_5)\text{ if } b(v)\neq 1
\end{cases}
\end{equation*}
independently over all $v \in V$ and generating $\bx$ and $\by$ independently of one another.
Note that when applying our $\VNA$ scheme to the above data, we set the number of blocks to be the ``true'' $d=5$, with the number of clusters in step (iv) set to 5 as well.
In practice, there are numerous principled heuristics to select this dimension parameter (e.g., USVT or finding an elbow in the scree plot \cite{chatterjee2014matrix,zhu2006automatic})
and the number of clusters (e.g., optimizing silhouette width or minimizing BIC \cite{mclust}).
We do not pursue these model selection problems further here.

The effects of $\epsilon$ and $\delta$ are as follows.
Larger values of $\epsilon$ provide more separation between the blocks in the underlying SBM, making it easier to distinguish the vertices of interest from the seeded vertices.
This is demonstrated in Figure~\ref{fig:varyeps}, where we vary $\epsilon = 0,0.1,0.2,0.3,0.5$ with $\delta=1$ held fixed.
The figure shows, for different values of $k$, the gain in precision at $k$ achieved by incorporating the graph topology as compared to a nomination scheme based on features alone.
That is, defining
\begin{itemize}
\item $\rGF(k)$ to be the number of vertices of interest in $G_2$ nominated in the top $k$ by $\VNA$ applied to $(G_1,\bx,G_2,\by)$, and
\item $\rF(k)$ to be the number of vertices of interest in $G_2$ nominated in the top $k$ by $\VNA$ applied to $(\bx,\by)$, that is, step (iv) of the algorithm above applied only to the vertex features,
\end{itemize}
That is, letting $\Phi$ be any of the our nomination schemes under consideration (e.g., the features-only scheme) and letting $V^*(G_2)$ denote the vertices of interest in $G_2$, we evaluate performance according to the number of vertices ranked in the top 
\begin{equation*}
r(k)= \left| \{u \in V^*(G_2)\in\text{ s.t. } \rank_{\Phi}(v)\leq k\} \right|,~~~k=1,2,\dots,40.
\end{equation*}
We note that we do not consider seeded vertices in our ranked list,
so the maximum value achievable by either $\rGF$ or $\rF$ is 40.

Figure~\ref{fig:varyeps} plots $\rGF(k)-\rF(k)$ for $k\in(1,10,20,30,40)$.
Results are averaged over 100 Monte Carlo replicates of the experiment,
with error bars indicating two standard errors of the mean.
Examining the figure, we see the expected phenomenon:
as  $\epsilon$ increases, the gain in VN precision from incorporating the network increases.
For small values of $\epsilon$, the graphs are detrimental to performance when compared to using features alone, since the structure of $\Lambda_1$ and $\Lambda_2$ are such that it is difficulty to distinguish the communities from one another (and to distinguish the interesting community from the rest of the network).
As $\epsilon$ increases, the community structure in networks $G_1$ and $G_2$ becomes easier to detect, and incorporating network structure into the VN procedure becomes beneficial to performance as compared to a procedure using only vertex features.

\begin{figure}[t!]
    \centering
    \includegraphics[width=0.7\textwidth]{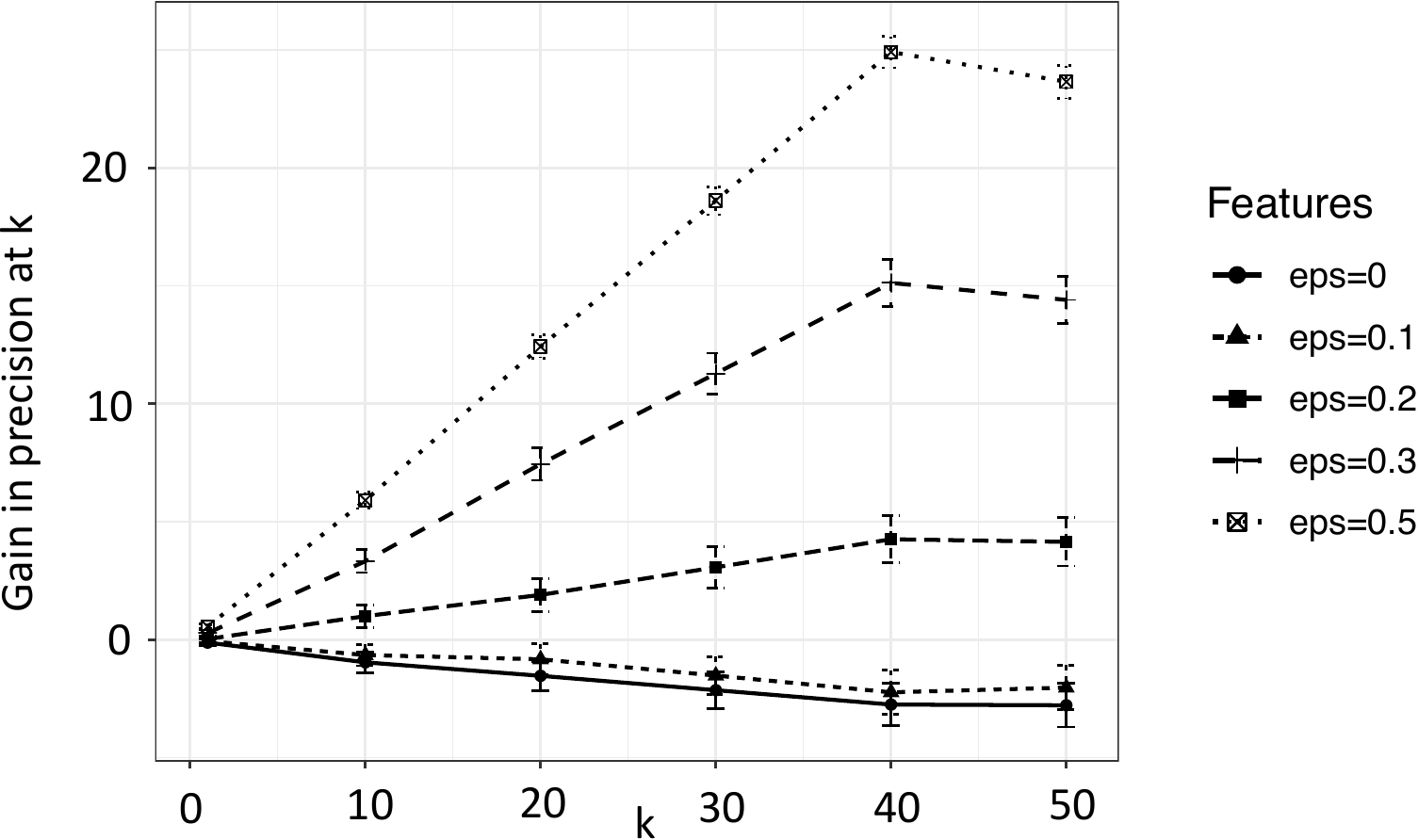} 
        \caption{ Improvement in vertex nomination performance
        under the stochastic block model specified above,
        as a function of $\epsilon=0,0.1,0.2,0.3,0.5$ for fixed $\delta=1$,
        based on $10$ randomly chosen seeded vertices.
        The plot shows $\rGF(k)-\rF(k)$, as defined in Equation~\eqref{eq:consis},
        for $k\in \{ 1,10,20,30,40 \}$.
    Results are averaged over 100 Monte Carlo trials, with error bars indicating two standard errors of the mean. }
    \label{fig:varyeps}
\end{figure}

While $\epsilon$ controls the strength of the signal present in the network,
$\delta$ controls the signal present in the features,
with larger values of $\delta$ allowing stronger delineation of the block of interest from the rest of the graph based on features alone.
To demonstrate this, we consider the same experiment as that summarized in Figure~\ref{fig:varyeps},
but this time fixing $\epsilon=0.25$ and varying $\delta = 0,0.5,1,1.5,2$.
The results are summarized in Figure~\ref{fig:varydel},
where we plot $\rGF(k)-\rG(k)$ over $k\in(1,10,20,30,40)$ where
$\rG(k)$ is the number of vertices of interest in $G_2$ nominated in the top $k$ by $\VNA$ applied to $(G_1,G_2)$ (i.e., ignoring vertex features).
As with Figure~\ref{fig:varyeps}, we see that as $\delta$ increases,
the gain in VN performance from incorporating vertex features increases.
For small values of $\delta$, features are slightly detrimental to performance,
again owing to the fact that there is insufficient signal present in them to
differentiate the vertices of interest from the rest of the network.

In each of Figures~\ref{fig:varyeps} and~\ref{fig:varydel}, using one of the two available data modalities (networks or features) gives performance that, while significantly better than chance, is suboptimal.
These experiments suggest that combining informative network structure with informative features should yield better VN performance than utilizing either source in isolation.

\begin{figure}[t!]
    \centering
    \includegraphics[width=0.7\textwidth]{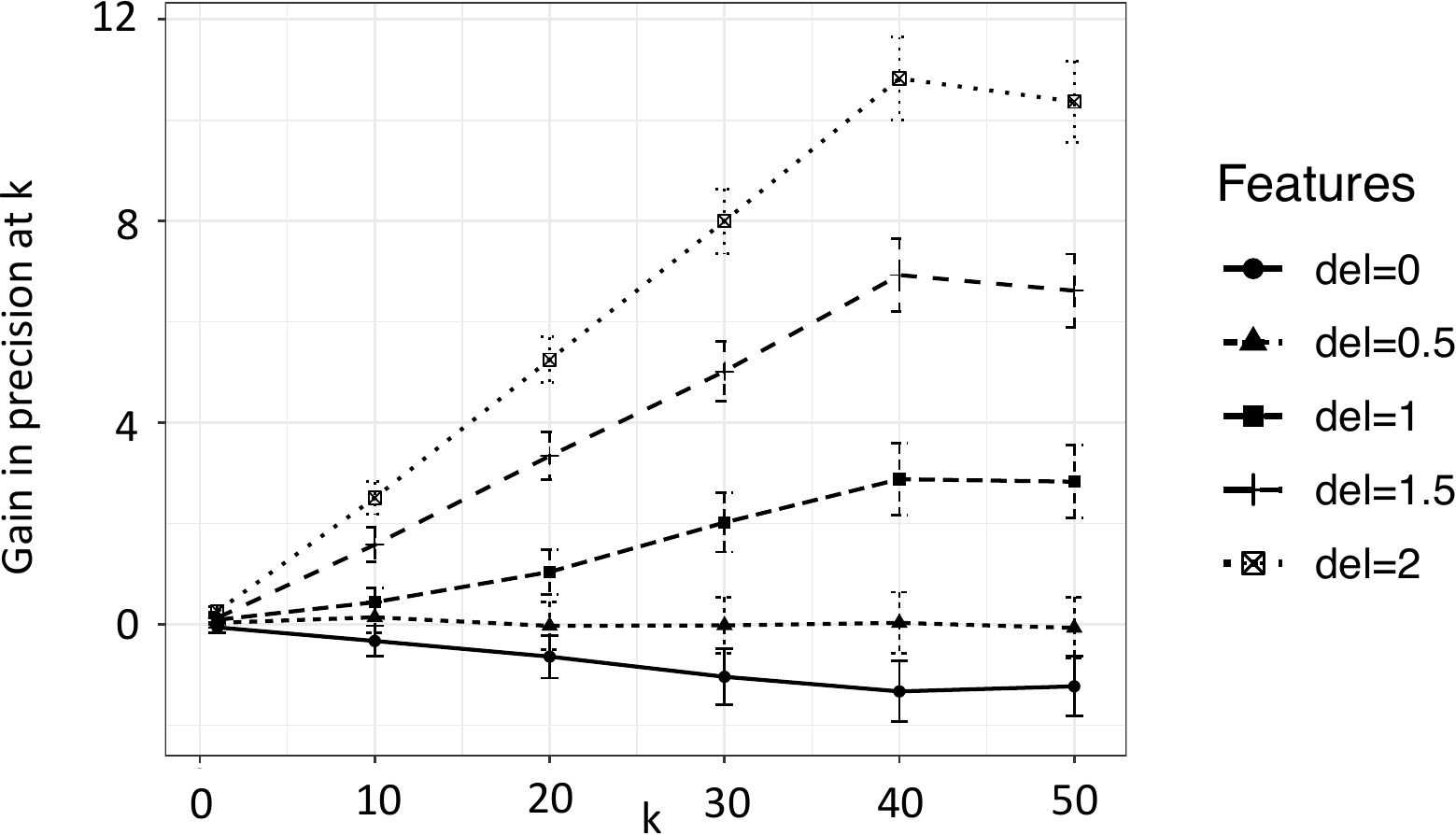} 
        \caption{Improvement in vertex nomination performance
        under the stochastic block model specified above,
        as a function of $\delta = 0,0.5,1,1.5,2$ for fixed $\epsilon=0.25$,
        based on $10$ randomly chosen seeded vertices.
        The plot shows $\rGF(k)-\rG(k)$, as defined in Equation~\eqref{eq:consis2},
        for $k\in \{ 1,10,20,30,40 \}$.
    Results are averaged over 100 Monte Carlo trials, with error bars indicating two standard errors of the mean.}
    \label{fig:varydel}
\end{figure}

\subsection{{\em C. Elegans}}
\label{sec:worms}
We next consider a real data example derived from the {\em C.\ elegans} connectome, as presented in \cite{celegans1,celegans2}.
In this data, vertices correspond to neurons in the {\em C.~elegans}, with edges encoding which pairs of neurons form synapses.
The data capture the connectivity among the 302 labeled neurons in the hermaphroditic {\em C.\ elegans} brain for two different synapse types called {\em electrical gap junctions}
and {\em chemical synapses}.
These two different synaptic types yield two distinct connectomes (i.e., brain networks)
capturing the two different kinds of interactions between neurons.
After preprocessing the data,
including removing neurons that are isolates in either connectome, symmetrizing the directed chemical connectome and removing self-loops (see \cite{chen2016joint} for details),
we obtain two weighted networks on $253$ shared vertices: $G_c$, capturing the chemical synapses, and $G_e$, capturing the electrical gap junction synapses.
The graphs are further endowed with vertex labels (i.e., vertex features), which assign each vertex (i.e., neuron) to one of three neuronal types: sensory, motor, or inter-neurons.
\begin{figure}[t!]
    \centering
\includegraphics[width=1\textwidth]{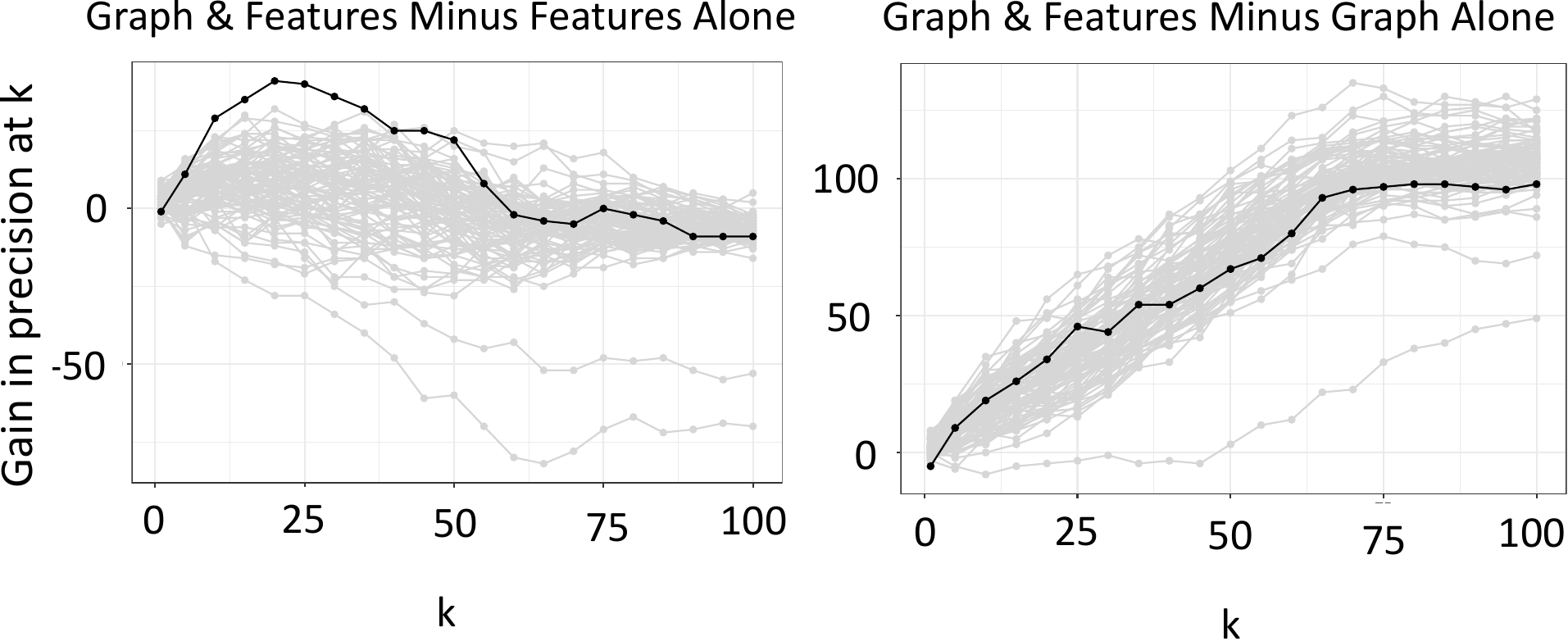}
        \caption{Improvement in vertex nomination performance when using both
            network structure and neuronal type features, compared to
            (a) using network structure only and
            (b) using neuronal type features only.
            Performance was measured according to the number of vertices of interest whose corresponding match was ranked in the top $k$ (i.e., $|\{v\in V(G_c)\in\text{ s.t. }\rank_{\Phi}(v')\leq k\}|$) as a function of $k$.
            Each grey line corresponds to a single trial,
            and shows the improvement of this performance measure when using both
            network structure and vertex features as compared to the performance of its feature-oblivious (left) or network-oblivious (right) counterpart.
            We have highlighted in black a single ``good'' trial in each subplot.}
    \label{fig:worm}
\end{figure}

Each of the 253 neurons in $G_c$ has a known true corresponding neuron in $G_e$.
Thus, there is a sensible ground truth in a vertex nomination problem
across $G_c$ and $G_e$, in the sense that each vertex in $G_c$ has one and only one corresponding vertex in $G_e$.
As such, this data provides a natural setting for evaluating vertex nomination performance.
We thus consider the following experiment:
a vertex $v$ in $G_c$ is chosen uniformly at random and designated as the vertex of interest.
An additional 20 vertices are sampled to serve as seeded vertices for the Procrustes alignment step, and the $\vnase$ nomination scheme is applied as outlined previously.
Performance was measured by computing the number of vertices of interest whose corresponding match was ranked in the top $k$, according to
\begin{equation*}
r(k)= \left| \{v \in V(G_c)\in\text{ s.t. } \rank_{\Phi}(v')\leq k\} \right|,~~~k=1,2,\dots,253,
\end{equation*}
where $v'$ denotes the vertex in $G_e$ corresponding to $v \in V(G_c)$.
We denote by $\rG, \rF$ and $\rGF$,
the performance of VN applied to, respectively,
the network only; the features only; and both the network and features jointly.
Figure~\ref{fig:worm} summarizes the result of 100 independent Monte Carlo trials of this experiment.
Each curve in the figure corresponds to one trial.
In each trial, we compared the performance of VN based on both network structure and vertex features against using either only network structure (i.e., feature-oblivious VN) or only vertex features (i.e., network-oblivious VN).
The left panel of Figure~\ref{fig:worm} shows VN performance based on both network structure and neuronal type features, which we append onto $\widehat X_c$ and $\widehat{Y}_e$ in step (iii) above,
minus performance of the scheme using the graph alone (i.e., $\rGF-\rG$).
Similarly, in the right panel of Figure~\ref{fig:worm},
we consider VN performance based on the graph with the neuronal type features minus performance of the scheme in the setting with only neuronal features (i.e., $\rGF-\rF$).
Within each plot, we have selected one line (i.e., one trial) to highlight, corresponding to a trial with a comparatively ``good'' seed set.
Note that same trial (and thus the same seed set) is highlighted in both panels.
Performance is also summarized in Table~\ref{tab:worm}.
\begin{table}[t!]
\centering
\begin{tabular}{|c|c|c|c|c|c|c|c|c|}
\hline
				& $k=1$ & $k=5$ & $k=10$ & $k=15$ & $k=20$ & $k=25$ & $k=30$ & $k=50$\\
\hline
$\rGF-\rF$ 	& 1.73 & 4.76  &7.60 & 7.83 & 8.12  &7.25 & 6.14& -1.29\\
\hline 
$\rGF-\rG$ 	& 1.53 &  7.94 & 15.55 & 22.31  &28.30 & 34.42 & 41.35& 73.43 \\
\hline
\end{tabular}
\caption{Mean values of $\rGF-\rG$ and $\rGF-\rF$ over the range of values of $k$ considered in the experiment.  Note that the mean of $\rGF-\rF$ is less than 0 for larger $k$.}
\label{tab:worm}
\end{table}

Using only the neuronal features for vertex nomination amounts to considering a coarse clustering of the neurons into the three neuronal types.
As such, recovering correspondences across the two networks networks based only on this feature information is effectively at chance, conditioned on the neuronal type.
When $\rGF-\rF$ is approximately $0$, the graph is effectively providing only enough information to coarsely cluster the vertices into their neuronal types.
Examining Figure~\ref{fig:worm} and Table~\ref{tab:worm}, it is clear
that incorporating features adds significant signal compared to only considering network structure.
Indeed, $\rGF-\rG$ is uniformly positive.

Interestingly, here the right-hand panel suggests that
adding the network topology improves performance compared to a scheme that
only uses features.
Of course, in general we expect that network structure should add significant signal to the features, but this observation is surprising in the present setting.
In the present data set, it is known that the network topology differs dramatically across the two different synapse types.
For example, $G_c$ has more than three times the edges of $G_e$.
As a result, it is notoriously difficult to discover the vertex correspondence across this pair of networks using only topology.
Indeed, state-of-the-art network alignment algorithms only recover approximately $5\%$ of the correspondences correctly even using $50$ a priori known seeded vertices \cite{patsolic2014seeded}.
It is thus not immediate that there is sufficient signal in the networks to identify individual neurons across networks beyond their vertex type.
While the features add significant signal to the network, the graph also adds signal to the features.
For small $k$, which are typically most important for most vertex nomination problems, $\rGF-\rF$ is positive on average, and for well-chosen seed sets (see the black lines in the figure), this difference can be dramatic. 

\subsection{Wikipedia data}
\label{wiki}

As another illustration of vertex nomination with features on real data, we consider a pair of networks derived from Wikipedia articles in English and French.
We begin with a network $G_{\mathrm{EN}}$ whose vertices correspond to English language Wikipedia articles and whose edges join pairs of articles reachable one from another via a hyperlink.
The English language network corresponds to the $n =1382$ articles within two hops of the article titled ``Algebraic Geometry'', with the articles grouped into 6 ``types'' according to their broad topics; see \cite {ma2012fusion} for a detailed description.
We then consider a paired network $G_{\mathrm{FR}}$ of $n=1382$ vertices corresponding to French language Wikipedia articles on the same topics, with correspondence across these networks encoding whether or not one article's title is an exact or approximate translation of the other.
The hyperlink structure among the articles within each language yields a natural network structure, and the semantic content of the pages, as encoded via a bag-of-words representation, provides a natural choice of vertex features.
As in \cite{shen2017manifold}, we consider capturing both the network and semantic feature information within each network via dissimilarity matrices.
We use shortest path dissimilarity in the hyperlink graph and cosine dissimilarity between extracted text features.
This procedure yields four dissimilarity matrices, two for each of English and French Wikipedia.
We then embed the pages according to these dissimilarity measures using canonical multidimensional scaling; see \cite{borg2005modern} for detail.
This yields four embeddings, corresponding to each pairing of language (English or French) and structure (network or semantic features).

In order to disentangle the information contained in the network structure and the features, we consider the following experiment.
Using $s=10$ randomly chosen ``seeded'' vertices across the networks (recall that seeded vertices are those whose correspondence is known a priori across networks), we align embeddings using orthogonal Procrustes alignment \cite{GowDij2004}.
We then cluster the combined point clouds as in step (iv) of Section~\ref{sec:exp} and nominate across the datasets using step (v) of Section~\ref{sec:exp}.
We use $K=6$ clusters in \texttt{MClust} to reflect the six different broad article types in the Wikipedia data.
We next use the JOFC algorithm of \cite{priebe2013manifold,lyzinski2017fast} to jointly embed the English dissimilarities and (separately) jointly embed the French dissimilarities.
Within each language, we then average across the embedded dissimilarities, and repeat the above procedure: Procrustes alignment using $s=10$ randomly chosen seeded vertices across the networks, followed by clustering and nomination according to steps (iv) and (v) outlined at the beginning of Section~\ref{sec:exp}.
This procedure was repeated with $25$ Monte Carlo replicates.
We note here that while the embedding procedure differs from adjacency spectral embedding, the core nomination strategy post-embedding is unchanged from that presented at the start of Section~\ref{sec:exp}.

As in the {\em C.~elegans} example, we consider each of the $1382$ articles as vertices of interest separately.
For each vertex $v$ in $G_{\mathrm{EN}}$, we consider the rank of its true corresponding match $v'$ in $G_{\mathrm{FR}}$, and record how many vertices have their true corresponding matches ranked in the top $k$, for varying values of $k$.
Figure~\ref{fig:wiki} summarizes how including both network structure and vertex features improves performance, as measured by the fraction of vertices whose true matches are ranked in the top $k$.
In the left (respectively, right) panel of Figure~\ref{fig:wiki}, we plot the performance when nominating across the jointly embedded graphs as compared with the performance when nominating across only the embedded graph dissimilarities (respectively, text feature dissimilarities).
Each light-colored curve in the figure corresponds to one of the 25 Monte Carlo replicates, with the dark curve representing the average across all 25 replicates.
In both settings, we see the overall positive effect of using both network and features when nominating across languages.
However, from the figure it is clear that there is an asymmetry in information content across network and features, in that the network topology contributes markedly less to the total performance gain than the textual feature information.
Moreover, while the inclusion of text features is nearly uniformly helpful as compared with the network alone (left panel), a poor choice of seed vertices may result in a situation wherein network information actually impedes performance compared to using text-derived features alone.
This can be seen in the light blue curves {\em below} zero in the right-hand panel of Figure~\ref{fig:wiki}.

\begin{figure}[t!]
    \centering
    \includegraphics[width=1\textwidth]{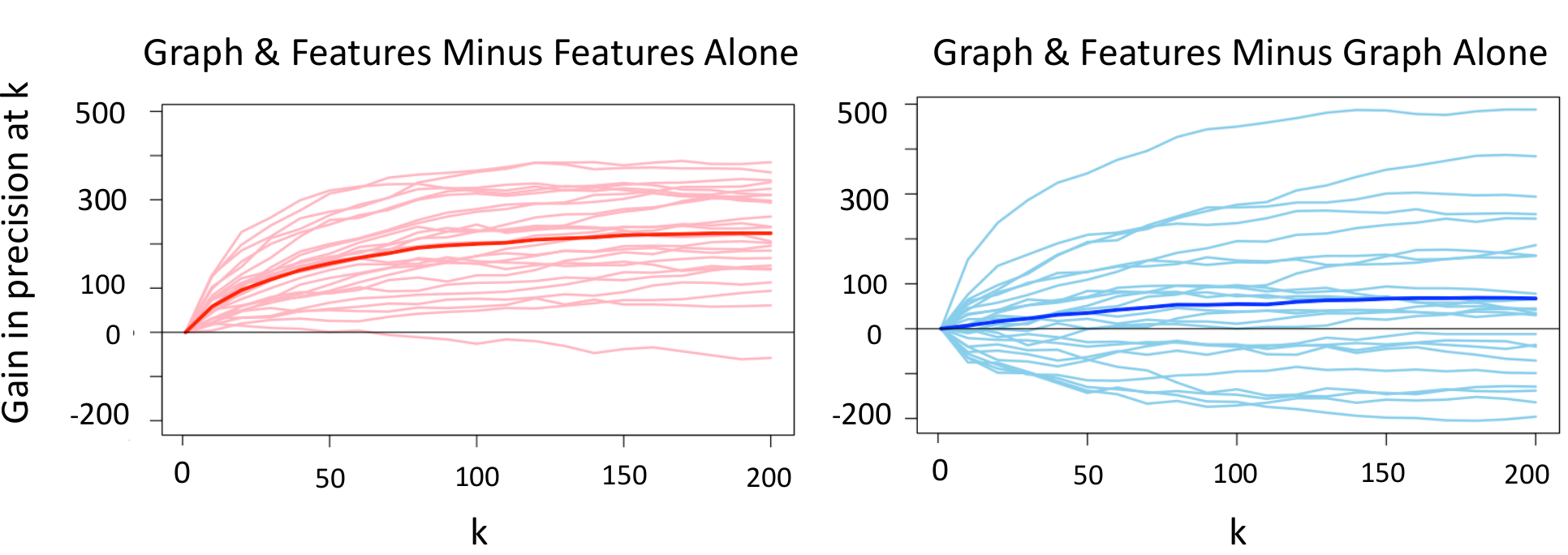} 
        \caption{Improvement in vertex nomination performance when using both network structure and semantic text features in the Wikipedia network example using (left) only network structure and (right) using semantic text features only.
            Performance was measured according to the number of vertices of interest whose corresponding match was ranked in the top $k$ (i.e., $|\{v\in V^*(G_{\mathrm{FR}}) \text{ s.t. }\rank_{\Phi}(v')\leq k\}|$) versus $k$.
            Each light-colored line corresponds to a single Monte Carlo replicate, with the dark curves representing the average across all replicates.}
    \label{fig:wiki}
\end{figure}

\section{Discussion}
\label{sec:discuss}
It is intuitively clear that informative features and network topology will together yield better performance in most network inference tasks compared to using either mode in isolation.
Indeed, in the context of vertex nomination, this has been established empirically across a host of application areas \cite{CopPri2012,marchette2011vertex}.
However, examples abound where the underlying network does not offer additional information for subsequent information retrieval, and may even be detrimental; see, for example,
\cite{rastogi2017vertex}.
In this paper, we have established the first (to our knowledge) theoretical exploration of the dual role of network and features, and we provide necessary and sufficient conditions under which VN performance can be improved by incorporating both network structure and features.
Along the way, we have formulated a framework for vertex nomination in richly-featured networks, and derived the analogue of Bayes optimality in this framework.
We view this work as constituting an initial step towards a more comprehensive understanding of the benefits of incorporating features into network data and complementing classical data with network structure.
A core goal of future work is to extend the framework presented here to incorporate continuous features; establish theoretical results supporting our empirical findings of the utility of features and network in the $\vnase$ algorithm; understand the role of missing or noisily observed features; and develop a framework for adversarial attack analysis in this richly-featured setting akin to that in \cite{agterberg2019vertex}.
\vspace{2mm}

\noindent \textbf{Acknowledgments} This material is based on research sponsored by the Air Force Research Laboratory and DARPA under agreement numbers FA8750-18-2-0035 and FA8750-20-2-1001,
and by NSF grants DMS-1646108 and DMS-2052918. This work is also supported in part by the D3M program of the Defense Advanced Research Projects Agency. The U.S. Government is authorized to reproduce and distribute reprints for Governmental purposes notwithstanding any copyright notation thereon. The views and conclusions contained herein are those of the authors and should not be interpreted as necessarily representing the official policies or endorsements, either expressed or implied, of the Air Force Research Laboratory and DARPA or the U.S. Government.  The authors also gratefully acknowledge the support of NIH grant BRAIN U01-NS108637. KL acknowledges the support of the University of Wisconsin-Madison Office of the Vice Chancellor for Research and Graduate Education with funding from the Wisconsin Alumni Research Foundation.

\appendix

\section{Algorithmic primitives}
Here, we provide background information and technical details related to the algorithmic primitives involved in the $\vnase$ scheme described in Section~\ref{sec:exp}.

\subsection{Passing to ranks and diagonal augmentation}
\label{sec:ptrda}
Consider a weighted adjacency matrix $A\in\mathbb{R}^{n\times n}$, and let $w \in \R^{s}$ be the vector of edge weights of $A$. Note that we are agnostic to the dimension of $w$, which will
vary according to whether $A$ is symmetric, hollow, etc. Define $r \in \R^s$ by taking
$r_i$ to be the rank of $w_i$ in the weight vector $w$, with ties broken by averaging ranks.
By the {\em pass-to-ranks} operation, we mean to replace the edge weights in $w$ with the
vector $2r/(s+1)$. That is, replacing the weighted edges of $A$ by their ranks.
Note that if $A$ is binary, the pass-to-ranks operation simply returns $A$ unchanged.

By {\em diagonal augmentation} we mean setting 
$$A_{i,i} =\sum_{j\neq i} A_{i,j}/(n-1)$$
for each $i=1,2,\cdots,n$.
In experiments, we find that these preprocessing steps are essential for robust and reliable
performance on real network data \cite{tang2018connectome}. 

\subsection{Adjacency Spectral Embedding}
\label{sec:ASE}

Given an undirected network with adjacency matrix $A \in \Rb^{n \times n}$, the $d$-dimensional Adjacency Spectral Embedding (ASE)
of $A$ yields a mapping of the $n$ vertices in the network to points in $d$-dimensional space in such a way that vertices that play similar structural roles
in the network are mapped to nearby points in $\Rb^d$ \cite{SusTanFisPri2012}.
\begin{definition} [Adjacency spectral embedding]\label{def:ASE}
	Given $d \in\mathbb{Z}_{>0}$, the {\em adjacency spectral embedding} (ASE) of $A$
	into $\Rb^{d}$ is defined by $\widehat{{ X}}={ U}_{{A}}
	{ S}_{{A}}^{1/2} \in \Rb^{n \times d}$ where
	\begin{equation*}|{ A}|=[{ U}_{{A}}|{ U}^{\perp}_{{A}}][{
		S}_{{A}} \oplus { S}^{\perp}_{{A}}][{
		U}_{{A}}|{ U}^{\perp}_{{A}}] \end{equation*}
	is the spectral decomposition of $|{A}| = ({A}^{T}{A})^{1/2}$, 
	${S}_{{A}}\in\mathbb{R}^{d\times d}$ is the diagonal matrix with the $d$ largest eigenvalues
	of $|{A}|$ on its diagonal and ${U}_{{A}}\in\mathbb{R}^{n\times d}$  has 
	columns which are the eigenvectors corresponding to the eigenvalues of ${S}_{{A}}$.
	The $i$-th row of $\widehat{X}$ corresponds to the position
	in $d$-dimensional Euclidean space to which the $i$-th vertex is mapped.
\end{definition}

\subsection{Seeds}
\label{sec:seeds}
In vertex nomination, vertices in the core $C$ are shared across the two networks, although the correspondence between $C\cap V_1$ and $C\cap V_2$ is unknown owing to the obfuscating function.
In many applications, however, some of these correspondences may be known ahead of time.
We refer to vertices in $C$ for which this correspondence is known as {\em seeded vertices},
and denote them by $S \subseteq C$.
Said another way, seeded vertices are vertices in $C$ whose labels are not obfuscated.
In this case, the obfuscating function would take the form $\mo_S:V_2\mapsto S\cup H$ where
\begin{equation*}
\mo_S(u)=\begin{cases}
u&\text{ if }u\in S\\
h\in H&\text{ if }u\in V_2\setminus S
\end{cases}
\end{equation*}
and $H$ is an obfuscating set of order $m-|S|$ satisfying $H\cap V_i=\emptyset$ for $i=1,2$.
Seeded vertices, and the information they provide, have proven to be valuable resources across both VN (e.g., 
\cite{FisLyzPaoChePri2015,lyzinski2016consistency,patsolic2017vertex})
and other network-alignment tasks (e.g., \cite{FAP,lyzinski2014seeded,mossel2019seeded}).

\subsection{Orthogonal Procrustes}
\label{sec:proc}
The $d$-dimensional adjacency spectral embedding of a network on $n$ vertices yields a collection
of $n$ points in $\Rb^d$, one point for each vertex.
A natural way to compare two networks on $n$ vertices is to compare the point clouds produced
by their adjacency spectral embeddings; see, e.g.,\cite{MT2}.
Approaches of this sort are especially natural in low-rank models, such as the
random dot product graph \cite{RDPGsurvey,RubCapTanPri2017}
and the stochastic block model.
In such models,  we can write the expectation of the adjacency matrix as
$\mathbb{E} A = X X^T$ for $X \in \Rb^{n \times d}$,
and the adjacency spectral embedding of $A$ is a natural estimate of $X$,
up to orthogonal rotation. That is, for some unknown orthogonal $Q \in \Rb^{d \times d}$,
$X$ and $\widehat{{ X}} Q$ are close.
Non-identifiabilities of this sort are inherent to latent space network models,
whereby transformations that preserve pairwise similarity of the latent positions
lead to identical distributions over networks \cite{ShaAst2017}.
Owing to this non-identifiability, comparison of two networks via
their adjacency spectral embeddings $\widehat{{ X}}$ and $\widehat{{ Y}}$ 
requires accounting for this unknown rotation.

Given matrices $X,Y\in\mathbb{R}^{n\times d}$,
the orthogonal Procrustes problem
seeks the orthogonal matrix $Q\in\Rb^{d \times d}$ that minimizes
$\|XQ-Y\|_F$ (where $\|\cdot\|_F$ is the Frobenius norm).
The problem is solved by computing the singular value decomposition
$X^T Y=U\Sigma V^T$, with the optimal $Q$ given then by $Q^*=UV^T$
\cite{schonemann1966generalized}.
We note that the orthogonal Procrustes problem is just one of a number of related
alignment problems for point clouds \cite{GowDij2004}.

\section{Proofs and supporting results}
Below we provide proofs of our main theoretical results and supporting lemmas.

\subsection{Proof of Theorem \ref{thm:BO}}
\label{sec:proof!}
Recall that $\mathcal{S}$ is the set of indices $i$ such that $\bbg_1^{(i)}$ and $\bbg_2^{(i)}$ are asymmetric as richly-featured networks (i.e., for $j=1,2$ there are no non-trivial f-automorphisms of $\bbg_j^{(i)}$).

To compare the VN loss of $\Phi^*$ to that of an arbitrary VN scheme $\Phi$, we will proceed as follows.
Let $k\leq m-1$ be fixed.
With
\begin{equation*}
    ({\bf G_1},{\bf G_2})= \left( (G_1,\bX,\bW),(G_2,\bY,\bZ) \right) \sim \FFt,
\end{equation*}
define
$A^j_v:=\{ \rank_{\Phi({\bf G_1},\mo({\bf G_2}),V^*)}(\mo(v)) = j\}$
for each $j\in[k]$.
Then we have that
\begin{align*}
\p( A^j_v)
&=\sum_{i\in\mathcal{S}}\p\left[ A^j_v\Big|(\bbg^{(i)}_1,[\mo(\bbg_2^{(i)})])\right] \p\left((\bbg^{(i)}_1,[\mo(\bbg_2^{(i)})])\right).
\end{align*}
Next, note that for each $v\in V^*$ and $i\in \mathcal{S}$,
\begin{align*}
&\left\{(\bbg_1,\bbg_2)\in 
\big( (\bbg_1^{(i)},[\mo(\bbg_2^{(i)})]\big)
: \rank_{\Phi(\bbg_1,\mo(\bbg_2),V^*)}(\mo(v)) =j\right\}\\
&=\left\{(\bbg_1,\bbg_2)\in 
\big( (\bbg_1^{(i)},[\mo(\bbg_2^{(i)})]\big)
: \Phi(\bbg_1,\mo(\bbg_2),V^*)[j]=\mo(v) \right\}\\
&=\Big\{(\bbg_1,\bbg_2)\in 
\big( (\bbg_1^{(i)},[\mo(\bbg_2^{(i)})]\big)
: \exists\text{ f-isomorphism }\sigma\text{ s.t. }\sigma(\mo(\bbg_2^{(i)}))=\mo(\bbg_2)\\
&\hspace{15mm}\text{ and }\sigma\big(\,
\Phi(\bbg_1^{(i)},\mo(\bbg_2^{(i)}),V^*)[j]\,
\big)= \mo(v)\Big\}\\
&=\big( \bbg_1^{(i)},[\mo(\bbg_2^{(i)})]\big)_
{\Phi(\bbg_1^{(i)},\mo(\bbg_2^{(i)}),V^*)[j]=\mo(v)}.
\end{align*}

To ease notation in what follows, we define the following key term for the support of $\FFt$ satisfying Assumption \ref{ass:A1}; i.e., on all  $(\bbg_1,\bbg_2)\in\bigcup_{i\in \mathcal{S}} \big( \bbg_1^{(i)},[\mo(\bbg_2^{(i)})]\big)$,
\begin{align*}
R_k(\Phi,\bbg_1,\bbg_2,V^*):&=\sum_{j\leq k}\sum_{v\in V^*}\p\left[ A^j_v\,\big|\,
\big( \bbg_1,\left[\mo(\bbg_2) \right]\big)\,\right]\\
&=\sum_{j\leq k}\sum_{v\in V^*}
\p\left[
\big( \bbg_1,\left[\mo(\bbg_2) \right]\big)_
{\Phi(\bbg_1,\mo(\bbg_2),V^*)[j]=\mo(v)}
\,\big|\,
\big( \bbg_1,\left[\mo(\bbg_2) \right]\big)\right]\\
&=\sum_{j\leq k}
\p\left[ 
\big( \bbg_1,\left[\mo(\bbg_2) \right]\big)_
{\Phi(\bbg_1,\mo(\bbg_2),V^*)[j]\in\mo(V^*)}
\,\big|\,\big( \bbg_1,\left[\mo(\bbg_2) \right]\big)\right],
\end{align*}
and note that, by definition of $\Phi^*$ as the optimal nomination scheme,
for any $i\in\mathcal{S}$, 
\begin{equation*}
R_k(\Phi,\bbg_1^{(i)},\bbg_2^{(i)},V^*)\leq 
R_k(\Phi^*,\bbg_1^{(i)},\bbg_2^{(i)},V^*).
\end{equation*}
Thus, for any FA-VN scheme $\Phi$, we have
\begin{equation*} \begin{aligned}
1-L_k(\Phi,V^*)&=\frac{1}{k}\sum_{v\in V^*}\p( \rank_{\Phi({\bf G_1},\mo({\bf G_2}),V^*)}(\mo(v)) \leq k) =\frac{1}{k}\sum_{j\leq k}\sum_{v\in V^*}\p( A_v^j) \\
&=\frac{1}{k}\sum_{j\leq k}\sum_{v\in V^*}
\sum_{i\in\mathcal{S}}\p\left( A^j_v\,\big|\,( \bbg_1^{(i)},[\mo(\bbg_2^{(i)})])\right)
 \p\left((\bbg^{(i)}_1,[\mo(\bbg_2^{(i)})])\right) \\
&=\frac{1}{k}\sum_{i\in\mathcal{S}}
R_k(\Phi,\bbg_1^{(i)},\bbg_2^{(i)},V^*) \p\left((\bbg^{(i)}_1,[\mo(\bbg_2^{(i)})])\right) \\
&\leq
\frac{1}{k}\sum_{i\in\mathcal{S}}
R_k(\Phi^*,\bbg_1^{(i)},\bbg_2^{(i)},V^*) \p\left((\bbg^{(i)}_1,[\mo(\bbg_2^{(i)})])\right)=1-L_k(\Phi^*,V^*),
\end{aligned} \end{equation*}
from which we deduce that $L_k(\Phi^*,V^*)\leq L_k(\Phi,V^*)$, completing the proof.

\subsection{ Proof of Theorem \ref{thm:feat}}
\label{sec:featpf}
Suppose that $\mathbb{I}(X_{\Phi^*}^k; (G_1,G_2) )=\mathbb{H}(X_{\Phi^*}^k)$,
whence $\bH(X_{\Phi^*}^k| (G_1,G_2))=0$
and thus for each $(g_1,g_2)$ with $\p((G_1,G_2)=(g_1,g_2))>0$
it holds for all $\xi\in \mathcal{T}_H^k$ that
\begin{equation*}
    \p(X_{\Phi^*}^k=\xi\,|\,(G_1,G_2)=(g_1,g_2))\in\{0,1\}.
\end{equation*}
For each $(g_1,g_2)$, let $\xi_{g_1,g_2}$ denote the unique element in the support of $X_{\Phi^*}^k\,|\,(G_1,G_2)=(g_1,g_2)$.
With this notation in hand, we define the FO-VN scheme $\Psi$ as follows.
For $\bbg_1=(g_1,\bx,\bw)$ and $\bbg_2=(g_2,\by,\bz)$, take
\begin{equation*}
    \Psi(\bbg_1,\mo(\bbg_2),V^*)=\widehat{\xi}_{g_1,g_2},
\end{equation*}
where $\widehat{\xi}_{g_1,g_2}\in \mathcal{T}_H$ satisfies
\begin{itemize}
\item[i.] $\widehat{\xi}_{g_1,g_2}[1:k]=\xi_{g_1,g_2}$;
\item[ii.] $\widehat{\xi}_{g_1,g_2}[k+1:m]$ is ordered lexicographically according to some predefined total ordering of $H$.
\end{itemize}
Then $\Psi$ is an FO-VN scheme by construction, and 
\begin{equation*}
    \Psi({\bf G}_1,\mo({\bf G}_2),V^*)[1:k] = 
	\Phi^*({\bf G}_1,\mo({\bf G}_2),V^*)[1:k]
	~\text{ almost surely,}
\end{equation*}
from which $L_k(\Phi^*,V^*)= L_k(\Psi,V^*)\geq L_k(\Psi^*,V^*)\geq L_k(\Phi^*,V^*)$
and it follows that $L_k(\Phi^*,V^*)=L_k(\Psi^*,V^*)$, as desired.

To prove the other half of the Theorem, we 
proceed as follows.
The assumption that 
$L_k(\Phi^*,V^*)=L_k(\Psi^*,V^*)$ implies that (with notation as in Section~\ref{sec:proof!}),
\begin{align*}
0&=L_k(\Phi^*,V^*)-L_k(\Psi^*,V^*)\\
&=\frac{1}{k}\sum_{i\in\mathcal{S}}\left[
R_k(\Psi^*,\bbg_1^{(i)},\bbg_2^{(i)},V^*)-
R_k(\Phi^*,\bbg_1^{(i)},\bbg_2^{(i)},V^*)\right] \p\left((\bbg_1^{(i)},[\mo(\bbg_2^{(i)})])\right),
\end{align*}
and therefore, since
\begin{equation*}
R_k(\Psi^*,\bbg_1^{(i)},\bbg_2^{(i)},V^*) \le
R_k(\Phi^*,\bbg_1^{(i)},\bbg_2^{(i)},V^*)
\end{equation*}
for all $i\in\mathcal{S}$, we conclude that
\begin{equation*}
R_k(\Psi^*,\bbg_1^{(i)},\bbg_2^{(i)},V^*)=
R_k(\Phi^*,\bbg_1^{(i)},\bbg_2^{(i)},V^*).
\end{equation*}
Therefore, there exists a tie-breaking scheme in the definition of $\Phi^*$ that yields
\begin{equation*}
\Psi^*(\bbg_1^{(i)},\mo(\bbg_2^{(i)}),V^*)[1:k]{=}\Phi^*(\bbg_1^{(i)},\mo(\bbg_2^{(i)}),V^*)[1:k]
\end{equation*}
for all $i\in\mathcal{S}$, and hence
\begin{equation*}
\Psi^*({\bf G}_1,\mo({\bf G}_2),V^*)[1:k]\stackrel{a.s.}{=}\Phi^*({\bf G}_1,\mo({\bf G}_2),V^*)[1:k].
\end{equation*}
We therefore have that 
$\bH(X_{\Phi^*}^k| (G_1,G_2))=\bH(X_{\Psi^*}^k| (G_1,G_2))$.
Since $\Psi^*$ is a constant given $(G_1,G_2)$, we have
$\bH(X_{\Psi^*}^k| (G_1,G_2))=0$, and therefore $\bH(X_{\Phi^*}^k| (G_1,G_2))=0$,
completing the proof.

\subsection{Proof of Theorem \ref{thm:topology}}
\label{sec:top_proof}
We assume throughout that $F$ satisfies Assumption \ref{ass:A3}.
Suppose $\mathbb{I}(X_{\Phi^*}^k; (\bX,\bY) )=\mathbb{H}(X_{\Phi^*}^k)$,
implying that $\bH(X_{\Phi^*}^k| (\bX,\bY))=0$.
For each $(\bx,\by)$ satisfying 
$\p((\bX,\bY)=(\bx,\by))>0$ and each $\xi\in \mathcal{T}_H^k$, we have
\begin{equation*}
    \p(X_{\Phi^*}^k=\xi\,|\,(\bX,\bY)=(\bx,\by))\in\{0,1\}.
\end{equation*}
For each $(\bx,\by)$, let $\xi_{\bx,\by}$ be the unique element in the support of $X_{\Phi^*}^k\,|\,(\bX,\bY)=(\bx,\by)$.
We define the NO-VN scheme $\Xi$ as follows.
For $\bbg_1=(g_1,\bx,\bw)$ and $\bbg_2=(g_2,\by,\bz)$, we take
\begin{equation*}
    \Xi(\bbg_1,\mo(\bbg_2),V^*)=\widehat{\xi}_{\bx,\by},
\end{equation*}
where $\widehat{\xi}_{\bx,\by}\in \mathcal{T}_H$ satisfies
\begin{itemize}
\item[i.] $\widehat{\xi}_{\bx,\by}[1:k]=\xi_{\bx,\by}$;
\item[ii.] $\widehat{\xi}_{\bx,\by}[k+1:m]$ is ordered lexicographically according to some predefined total ordering of $H$.
\end{itemize}
$\Xi$ is an NO-VN scheme by construction, and 
\begin{equation*}
    \Xi({\bf G}_1,\mo({\bf G}_2),V^*)[1:k]\stackrel{a.s.}{=}\Phi^*({\bf G}_1,\mo({\bf G}_2),V^*)[1:k],
\end{equation*}
from which
$L_k(\Phi^*,V^*)= L_k(\Xi,V^*)\geq L_k(\Xi^*,V^*)\geq L_k(\Phi^*,V^*)$,
and we conclude that $L_k(\Phi^*,V^*)=L_k(\Xi^*,V^*)$.

To prove the other half of the theorem, we
note that 
$L_k(\Phi^*,V^*)=L_k(\Xi^*,V^*)$ implies that (with notation as in Section \ref{sec:proof!}),
\begin{align*}
0&=L_k(\Phi^*,V^*)-L_k(\Xi^*,V^*)\\
&=\frac{1}{k}\sum_{i\in\mathcal{S}}\left[
R_k(\Xi^*,\bbg_1^{(i)},\bbg_2^{(i)},V^*)-
R_k(\Phi^*,\bbg_1^{(i)},\bbg_2^{(i)},V^*)\right] \p\left((\bbg_1^{(i)},[\mo(\bbg_2^{(i)})])\right),
\end{align*}
since
\begin{equation*}
R_k(\Xi^*,\bbg_1^{(i)},\bbg_2^{(i)},V^*) \le
R_k(\Phi^*,\bbg_1^{(i)},\bbg_2^{(i)},V^*)
\end{equation*}
for all $i\in\mathcal{S}$, we conclude that
\begin{equation*}
R_k(\Xi^*,\bbg_1^{(i)},\bbg_2^{(i)},V^*)=
R_k(\Phi^*,\bbg_1^{(i)},\bbg_2^{(i)},V^*).
\end{equation*}

Thus, there exists a tie-breaking scheme in the definition of $\Phi^*$
such that
\begin{equation*}
    \Xi^*(\bbg_1^{(i)},\mo(\bbg_2^{(i)}),V^*)[1:k]
    {=}
    \Phi^*(\bbg_1^{(i)},\mo(\bbg_2^{(i)}),V^*)[1:k]
\end{equation*}
for all $i\in\mathcal{S}$, and hence
\begin{equation*}
    \Xi^*({\bf G}_1,\mo({\bf G}_2),V^*)[1:k]\stackrel{a.s.}{=}\Phi^*({\bf G}_1,\mo({\bf G}_2),V^*)[1:k].
\end{equation*}
We then have that
$\bH(X_{\Phi^*}^k| (\bX,\bY))=\bH(X_{\Xi^*}^k| (\bX,\bY))$.
Since $\Xi^*$ is a constant given $(\bX,\bY)$, we have
$\bH(X_{\Xi^*}^k| (\bX,\bY))=0$, whence we conclude that
$\bH(X_{\Phi^*}^k| (\bX,\bY))=0$, which completes the proof.

\subsection{Supporting lemmas}
The following lemma follows from our assumption of asymmetry.
\begin{lemma}
If 
$(\bbg_1,\bbg_2)\in 
\big( \bbg_1^{(i)},[\mo(\bbg_2^{(i)})]\big)$ for $i\in\mathcal{S}$, then
\vspace{-1mm}
\begin{equation*}
\big( \bbg_1,\left[\mo(\bbg_2) \right]\big)_
{\Phi(\bbg_1,\mo(\bbg_2),V^*)[j]=\mo(v)}=
\big( \bbg_1^{(i)},[\mo(\bbg_2^{(i)})]\big)_
{\Phi(\bbg_1^{(i)},\mo(\bbg_2^{(i)}),V^*)[j]=\mo(v)}.
\end{equation*}
\end{lemma}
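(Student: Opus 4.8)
The plan is to unwind the definitions of the two restricted sets and use the asymmetry hypothesis (which holds precisely because $i\in\mathcal{S}$, so both $\bbg_1^{(i)}$ and $\bbg_2^{(i)}$ have trivial $f$-automorphism group) to identify them term by term. First I would recall that since $(\bbg_1,\bbg_2)\in\big(\bbg_1^{(i)},[\mo(\bbg_2^{(i)})]\big)$, by definition $\bbg_1=\bbg_1^{(i)}$ and there is a (unique, by asymmetry of $\bbg_2^{(i)}$) $f$-isomorphism $\tau$ with $\mo(\bbg_2)=\tau(\mo(\bbg_2^{(i)}))$. The consistency criterion of Definition~\ref{def:VN} applied to the obfuscating functions $\mo$ and $\tau\circ\mo$ — together with the fact that under asymmetry $\mathcal{I}(u;\bbg_2)=\{u\}$ for every $u$ — forces $\Phi(\bbg_1,\mo(\bbg_2),V^*)=\tau(\Phi(\bbg_1^{(i)},\mo(\bbg_2^{(i)}),V^*))$, and in particular $\Phi(\bbg_1,\mo(\bbg_2),V^*)[j]=\tau\big(\Phi(\bbg_1^{(i)},\mo(\bbg_2^{(i)}),V^*)[j]\big)$.

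Next I would show the two sets have the same membership. Take $(\bbg_1',\widehat{\bbg}_2')$ in the left-hand side: by definition there is an $f$-isomorphism $\sigma$ with $\mo(\widehat{\bbg}_2')=\sigma(\mo(\bbg_2))$ and $\sigma\big(\Phi(\bbg_1,\mo(\bbg_2),V^*)[j]\big)=\mo(v)$. Composing, $\mo(\widehat{\bbg}_2')=(\sigma\circ\tau)(\mo(\bbg_2^{(i)}))$, so $\sigma\circ\tau$ is an $f$-isomorphism witnessing membership relative to $(\bbg_1^{(i)},\bbg_2^{(i)})$; and by the identity from the previous paragraph, $(\sigma\circ\tau)\big(\Phi(\bbg_1^{(i)},\mo(\bbg_2^{(i)}),V^*)[j]\big)=\sigma\big(\Phi(\bbg_1,\mo(\bbg_2),V^*)[j]\big)=\mo(v)$. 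Hence $(\bbg_1',\widehat{\bbg}_2')$ lies in the right-hand side. The reverse inclusion is symmetric, composing instead with $\tau^{-1}$. This gives set equality and completes the argument.

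The step I expect to be the main (if modest) obstacle is pinning down exactly why $\Phi(\bbg_1,\mo(\bbg_2),V^*)[j]=\tau\big(\Phi(\bbg_1^{(i)},\mo(\bbg_2^{(i)}),V^*)[j]\big)$: one must invoke the ``$\mathcal{I}(u;\bbg_2)=\{u\}$ forces $\Phi(\bbg_1,\sigma(\mo(\bbg_2)),V^*)=\sigma(\Phi(\bbg_1,\mo(\bbg_2),V^*))$'' remark following Assumption~\ref{ass:FAVN}, and one must be careful that the roles of $\mo$ and $\tau\circ\mo$ (as two elements of $\mO_H$) line up correctly with the way consistency is stated, since the obfuscating set $H$ is fixed while the f-isomorphism $\tau$ permutes it. Everything else is routine composition of bijections, and no new estimates are needed.
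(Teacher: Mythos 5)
Your proposal is correct and follows essentially the same route as the paper: identify $\bbg_1=\bbg_1^{(i)}$ and the (unique, by asymmetry) f-isomorphism $\tau$, then use the consistency criterion of Definition~\ref{def:VN} to get $\Phi(\bbg_1,\mo(\bbg_2),V^*)=\tau(\Phi(\bbg_1^{(i)},\mo(\bbg_2^{(i)}),V^*))$, and conclude that the restricted sets coincide. The only cosmetic difference is in the last step, where the paper rewrites both sides via the representative-independent characterization $\{(\bbg_1',\bbg_2') : \rank_{\Phi(\bbg_1',\mo(\bbg_2'),V^*)}(\mo(v))=j\}$ while you chase elements and compose the witnessing f-isomorphisms with $\tau$ and $\tau^{-1}$ directly; both are valid and amount to the same computation.
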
 
\begin{proof}
By the assumption that $(\bbg_1,\bbg_2)\in 
\big( \bbg_1^{(i)},[\mo(\bbg_2^{(i)})]\big)$, we have that $\bbg_1=\bbg_1^{(i)}$, and there exists an isomorphism $\tau$ such that $\bbg_2=\tau(\bbg_2^{(i)})$.
From our assumption that $i\in\mathcal{S}$ and the consistency criteria in Definition \ref{def:VN}, 
\vspace{-1mm}
\begin{equation*}
\Phi(\bbg_1,\mo(\bbg_2),V^*)=\tau(\Phi(\bbg_1^{(i)},\mo(\bbg_2^{(i)}),V^*)).
\end{equation*}
A similar argument shows that
$\big( \bbg_1,\left[\mo(\bbg_2) \right]\big)=\big( \bbg_1^{(i)},[\mo(\bbg_2^{(i)})]\big)$.
We then have that 
\begin{equation*} \begin{aligned}
&\big( \bbg_1,\left[\mo(\bbg_2) \right]\big)_
{\Phi(\bbg_1,\mo(\bbg_2),V^*)[j]=\mo(v)}\\
&~~~~~~=\left\{(\bbg_1',\bbg_2')\in 
\big( \bbg_1,\left[\mo(\bbg_2) \right]\big)
\text{ s.t. }\rank_{\Phi(\bbg_1',\mo(\bbg_2'),V^*)}(\mo(v)) =j\right\}\\
&~~~~~~=\left\{(\bbg_1',\bbg_2')\in 
\big( \bbg_1^{(i)},[\mo(\bbg_2^{(i)})]\big)
\text{ s.t. }\rank_{\Phi(\bbg_1',\mo(\bbg_2'),V^*)}(\mo(v)) =j\right\}\\
&~~~~~~=\big( \bbg_1^{(i)},[\mo(\bbg_2^{(i)})]\big)_
{\Phi(\bbg_1^{(i)},\mo(\bbg_2^{(i)}),V^*)[j]=\mo(v)},
\end{aligned} \end{equation*}
as we set out to show.
\end{proof}

\vspace{-5mm}

\begin{lemma}
\label{lem:R}
Let $\Phi^*$ be a Bayes optimal VN scheme, and let $\Phi$ be any other VN scheme.
For any $(\bbg_1,\bbg_2)\in\bigcup_{i\in \mathcal{S}} \big( \bbg_1^{(i)},[\mo(\bbg_2^{(i)})]\big)$,
\begin{align*}
R_k(\Phi,\bbg_1,\bbg_2,V^*) \leq R_k(\Phi^*,\bbg_1,\bbg_2,V^*).
\end{align*}
\end{lemma}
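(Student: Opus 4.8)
The plan is to reduce the inequality to its class representative, rewrite $R_k$ as a sum of $k$ numbers drawn from a fixed finite list indexed by $H$, and then observe that the greedy construction of $\Phi^*$ is exactly the operation that maximizes such a sum.

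First I would fix $(\bbg_1,\bbg_2)\in\bigcup_{i\in\mathcal{S}}\big(\bbg_1^{(i)},[\mo(\bbg_2^{(i)})]\big)$ and let $i\in\mathcal{S}$ be the (unique) index with $(\bbg_1,\bbg_2)\in\big(\bbg_1^{(i)},[\mo(\bbg_2^{(i)})]\big)$. By the preceding lemma, $\big(\bbg_1,[\mo(\bbg_2)]\big)=\big(\bbg_1^{(i)},[\mo(\bbg_2^{(i)})]\big)$ and, for each $j\le k$ and $v\in V^*$, the event $\big(\bbg_1,[\mo(\bbg_2)]\big)_{\Phi(\bbg_1,\mo(\bbg_2),V^*)[j]=\mo(v)}$ equals $\big(\bbg_1^{(i)},[\mo(\bbg_2^{(i)})]\big)_{\Phi(\bbg_1^{(i)},\mo(\bbg_2^{(i)}),V^*)[j]=\mo(v)}$; summing over $v\in V^*$ and over $j\le k$ shows $R_k(\Phi,\bbg_1,\bbg_2,V^*)=R_k(\Phi,\bbg_1^{(i)},\bbg_2^{(i)},V^*)$ for every scheme $\Phi$. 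So it suffices to prove the inequality at the representative $(\bbg_1^{(i)},\bbg_2^{(i)})$.

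Next, for each $u\in H$ set
\begin{equation*}
p_u:=\p\Big(\big(\bbg_1^{(i)},[\mo(\bbg_2^{(i)})]\big)_{u\in\mo(V^*)}\;\Big|\;\big(\bbg_1^{(i)},[\mo(\bbg_2^{(i)})]\big)\Big),
\end{equation*}
which depends only on $u$ and $i$, not on any scheme. Writing $u_j:=\Phi(\bbg_1^{(i)},\mo(\bbg_2^{(i)}),V^*)[j]$ and unwinding the third line of the definition of $R_k$, we obtain $R_k(\Phi,\bbg_1^{(i)},\bbg_2^{(i)},V^*)=\sum_{j=1}^{k}p_{u_j}$. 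Since $\Phi(\bbg_1^{(i)},\mo(\bbg_2^{(i)}),V^*)$ is a total ordering of $H$, the $u_1,\dots,u_k$ are $k$ distinct elements of $H$, so $R_k(\Phi,\cdot)$ is the sum of $p_u$ over a size-$k$ subset of $H$; by a one-line rearrangement argument such a sum is maximized precisely when the subset realizes the $k$ largest values among $\{p_u:u\in H\}$.

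Finally I would invoke the definition of $\Phi^*$ in Equation~\eqref{eq:BOFA}: we have $\Phi^*(\bbg_1^{(i)},\mo(\bbg_2^{(i)}),V^*)[1]\in\argmax_{u\in H}p_u$, $\Phi^*(\bbg_1^{(i)},\mo(\bbg_2^{(i)}),V^*)[2]\in\argmax_{u\in H\setminus\{\Phi^*[1]\}}p_u$, and so on, so that the multiset $\{p_{\Phi^*[1]},\dots,p_{\Phi^*[k]}\}$ consists exactly of the $k$ largest values of $\{p_u:u\in H\}$, independently of how ties are broken. Hence $\sum_{j\le k}p_{u_j}\le\sum_{j\le k}p_{\Phi^*[j]}$, i.e.\ $R_k(\Phi,\bbg_1^{(i)},\bbg_2^{(i)},V^*)\le R_k(\Phi^*,\bbg_1^{(i)},\bbg_2^{(i)},V^*)$, which combined with the first step gives the lemma. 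I do not expect a genuine obstacle: the only care needed is the bookkeeping in the reduction step, and the substance is the observation that maximizing $R_k$ pointwise is a trivial greedy-selection problem that $\Phi^*$ solves by construction.
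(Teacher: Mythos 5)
Your proof is correct and follows essentially the same route as the paper: reduce to the class representative $(\bbg_1^{(i)},\bbg_2^{(i)})$ via the preceding lemma, then appeal to the greedy construction of $\Phi^*$ in Equation~\eqref{eq:BOFA}. The only difference is that you make explicit, via the quantities $p_u$ and the rearrangement observation, the step the paper compresses into ``the inequality follows from the optimality of $\Phi^*$,'' which is a welcome clarification rather than a deviation.
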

\begin{proof}
If there exists an $i\in\mathcal{S}$ such that $(\bbg_1,\bbg_2)=(\bbg_1^{(i)},\bbg_2^{(i)})$, the result follows from the definition of $\Phi^*$.
Consider then
\begin{equation*}
    (\bbg_1,\bbg_2)\in\left\{\bigcup_{i\in \mathcal{S}} \big( \bbg_1^{(i)},[\mo(\bbg_2^{(i)})]\big)\right\}
    \bigg \backslash
    \left\{\{(\bbg_1^{(i)},\bbg_2^{(i)})\}_{i\in\mathcal{S}} \right\},
\end{equation*}
and let 
$i' \in \mathcal{S}$ be such that $(\bbg_1,\bbg_2)\in \big( \bbg_1^{(i')},[\mo(\bbg_2^{(i')})]\big)$.
We have that
\begin{align*}
R_k(\Phi,\bbg_1,\bbg_2,V^*)&=\sum_{j\leq k}
\p\left( 
\big( \bbg_1,\left[\mo(\bbg_2) \right]\big)_
{\Phi(\bbg_1,\mo(\bbg_2),V^*)[j]\in\mo(V^*)}
\,\big|\,\big( \bbg_1,\left[\mo(\bbg_2) \right]\big)\right)\\
&=\sum_{j\leq k}
\p\left( 
\big( \bbg_1^{(i')},[\mo(\bbg_2^{(i')})]\big)_
{\Phi(\bbg_1^{(i')},\mo(\bbg_2^{(i')}),V^*)[j]\in\mo(V^*)}
\,\big|\,\big( \bbg_1^{(i')},[\mo(\bbg_2^{(i')})]\big)\right)\\
&\leq\sum_{j\leq k}
\p\left( 
\big( \bbg_1^{(i')},[\mo(\bbg_2^{(i')})]\big)_
{\Phi^*(\bbg_1^{(i')},\mo(\bbg_2^{(i')}),V^*)[j]\in\mo(V^*)}
\,\big|\,\big( \bbg_1^{(i')},[\mo(\bbg_2^{(i')})]\big)\right)\\
&=\sum_{j\leq k}
\p\left( 
\big( \bbg_1,\left[\mo(\bbg_2) \right]\big)_
{\Phi^*(\bbg_1,\mo(\bbg_2),V^*)[j]\in\mo(V^*)}
\,\big|\,\big( \bbg_1,\left[\mo(\bbg_2) \right]\big)\right)\\
&=R_k(\Phi^*,\bbg_1,\bbg_2,V^*),
\end{align*}
where the inequality follows from the optimality of $\Phi^*$.
\end{proof}

\vskip 0.2in
\bibliographystyle{plain} 
\bibliography{biblio}

\end{document}